\documentclass{llncs}

\usepackage{amsfonts,amsmath,amssymb}
\usepackage{graphicx}
\usepackage{url}

%
%




\newcommand{\nd}{\noindent}

\newcommand{\ie}{i.e.}
\newcommand{\calL}{\mathcal{L}}
\newcommand{\calM}{\mathcal{M}}
\newcommand{\order}{\preceq}
\newcommand{\sorder}{\prec}

\newcommand{\siorder}{\succ}

\newcommand{\eg}{e.g.}

\newcommand{\lf}{\hat{f}}
\newcommand{\tf}{\tilde{f}}

\newcommand{\lfx}[1]{\hat{#1}}

\newcommand{\ii}[1]{\mbox{$(#1)$}}

\renewcommand{\vec}[1]{\mathbf{#1}}


%
\setlength{\textwidth}{12.3cm}
\setlength{\textheight}{19.4cm}
%


\begin{document}

\title{Monotonic Mappings Invariant Linearisation of  Finite Posets}

\author{Nicol\'{a}s Madrid Labrador~\inst{1} \and Umberto Straccia~\inst{2}}

\institute{Dept. Matem\'{a}tica Aplicada, Universidad de M\'{a}laga,
Spain \and Istituto di Scienza e Tecnologie dell'Informazione (ISTI
- CNR), Pisa, Italy
}

\maketitle

\begin{abstract}

\noindent In this paper we describe a novel a procedure to build a
linear order from an arbitrary poset which \ii{i} preserves the
original ordering and \ii{ii} allows to extend monotonic and
antitonic mappings defined over the original poset to monotonic and
antitonic mappings over the new linear poset.

\end{abstract}

\section{Introduction}

\nd Sorting the entry data as an optimal queue in order to optimise resources is a problem that is resolved in some specific cases only, as for example, when the algorithm deals with monotonic mappings  and the data has a linear poset structure, as \eg~it arises in the context of top-k retrieval~\cite{LiC05,Lukasiewicz07top-kretrieval}, which originated the problem addressed here.~\footnote{In top-k retrieval one usually is interested in ranking and selecting the top-k answers to a query according to the answer's score, where the score has been computed using some monotone function such as $f\colon \mathbb{R}^{n} \to \mathbb{R}$. In this case, $f$ monotone iff  $f(\vec{x}) \leq f(\vec{y})$ whenever $\vec{x} \leq \vec{y}$, where $\vec{x} \leq \vec{y}$ holds when $x_{i} \leq y_{i}$ for all $i$.}

Nevertheless the scoring data may not be structured as a linear lattice in general. Such cases arises, \eg, when the score is an interval $[a,b]$, \eg~indicating the probability of an answer being true. In such cases, one still wants to rank the answers in order of ``highest``'' probability, but now the order is a partial order only. 


A posssible solution to this drawback is by redefining a linear ranking in the data. For example \cite{Li04ranksq} resolves this drawback by defining a linear ordering through a  \emph{``ranking"} operator, which transforms each data-tuple into a value $\alpha\in\mathbb{R}$. Another possibility is to use \emph{linear extensions} such as described in~\cite{Brightwell:91,loof:2010,patil:2003,GeneratingLEFast} in which a poset $(L,\leq)$ is extended to a linear poset $(L,\leq')$. 

However none of these \emph{linearisations methods} allow to handle appropriately monotonic mappings, in the sense that  monotonic and antitonic mappings defined over the original poset can be extended to monotonic and antitonic mappings over the new linear poset (see Section \ref{relatedWork}).


The contribution of this paper is the following. We will define a new linear poset $(\calL,\order)$ from an arbitrary  poset $(L,\leq)$ by satisfying three convenient conditions.  Firstly, each element in $L$ is assigned to an element in $\calL$; i.e there is a mapping $p\colon L\to \calL$. Secondly, the new linear lattice  $(\calL, \order)$ will be conservative with respect to the original ordering, that is if $x < y$ in $(L,\leq)$ then $p(x) \sorder p(y)$ in $(\calL,\order)$. Thirdly, if  $(L,\leq)$ is a linear poset then  $(L,\leq) \equiv (\calL,\leq)$.

This new poset $(\calL,\order)$ will be called the linearisation of $(L,\leq)$. Given a linearisation, the elements in the poset can be ranked linearly by using the value assigned in the linearisation. The improvement  with respect to other linear extensions is that we will be able to extend any monotonic mapping defined over $(L,\leq)$ to a monotonic mapping defined over $(\calL,\order)$. Hence, the rank (or ordering) provided by the linearisation respects the monotonicity of all monotonic mappings defined over $(L,\leq)$. 

In the following, we proceed as follows. In Section \ref{preli} we introduce the notation and basic definitions needed to define our linearisation procedure.  In Section \ref{linearisation} we define two linearisation procedures. In Section \ref{monotonyMappings} we show that monotonic and antitonic mappings can be extended to the linearisation by preserving monotonicity and antitonicity over these linearisations. Finally in Section \ref{relatedWork} we briefly compare the linearisation described in this paper with the linear extensions described in the literature and conclude.

\section{Preliminaries}\label{preli}

Let $L$ be a set. A binary relation $\leq$ determines an order relation in $L$ if the following properties hold for all $x,y,z\in L$: $x\leq x$ for all $x\in L$  (\emph{reflexivity}),  if $x\leq y$ and $y\leq x$ then $x=y$  (\emph{antisymmetry} and  if $x\leq y$ and $y\leq z$ then $x\leq z$  (\emph{transitivity}). Moreover, if the order relation also satisfies the property: for all $x,y\in L$ either $x\leq y$ or $y\leq y$ (\emph{linearity}) we say that $\leq$ determines a \emph{linear ordering} in $L$; otherwise, $\leq$ determines a \emph{partial ordering} in $L$.  A \emph{poset} is pair $(L,\leq)$ such that $\leq$ is an order relation in $L$. A \emph{chain} in a poset $(L,\leq)$ is a subset  $\{x_0,x_1,\dots,x_i,\dots\}$ of $L$ such that for all $i\in \mathbb{N}$: $x_i\neq x_{i+1}$, $x_i\leq x_{i+1}$
Usually, the chain $\{x_0,x_1,\dots,x_i,\dots\}$ is denoted as $x_0<x_1<\dots<x_i<\dots$. The \emph{cardinality} of a chain $C$ is usually called the \emph{length} of $C$. A chain $C$ is called \emph{maximal} if there is no chain $C'$ such that $C\subset C'$. 
Given a poset $(L,\leq)$ we extend $\leq$ to $L^{n}$ point wise, \ie~for $\vec{x}, \vec{y} \in L^{n}$, $\vec{x} \leq \vec{y}$ iff $x_{i} \leq y_{i}$, for all $i$.
Now, let $(L,\leq_L)$ and $(M,\leq_M)$ be two posets. A mapping  $f\colon L^n\to M^m$, where  $f(\vec{x})=(f_1(\vec{x}),\dots,f_m(\vec{x}))$  is called  \emph{monotonic} (resp. \emph{antitonic}) if $\vec{x} \leq_L \vec{y}$ implies  $f(\vec{x}) \leq_M f(\vec{y})$ (resp. $f(\vec{x})\geq_M f(\vec{y})$).

%


Let $\mathbb{X}$ be a set of subsets of a poset $(L,\leq)$. An order relation $\sqsubseteq$ defined over $\mathbb{X}$ determines a:
\begin{itemize}
\item \emph{H-ordering} if $X\sqsubseteq Y$ imply for all $x\in X$ there is $y\in Y$ such that $x\leq y$ (Hoare order).
\item \emph{S-ordering} if $X\sqsubseteq Y$ imply for all $y\in Y$ there is $x\in X$ such that $x\leq y$ (Smyth order).
\item \emph{EM-ordering} if $\sqsubseteq$ determines a H-ordering and a S-ordering (Egli-Milner order).
\end{itemize}

\section{The linearisation of a finite poset}\label{linearisation}

The aim of this section is to define, from a \emph{finite} poset $(L,\leq)$, a linear poset $(\calL, \order)$ satisfying the properties described in the introduction; namely:
\begin{itemize}
\item If $(L,\leq)$ is a linear poset then $(L,\leq) = (\calL,\preceq)$
\item  There is a map $p:L\to \calL$ such that $\forall \ x,y\in L$ if $x<y$ then $p(x)< p(y)$.
\end{itemize}

\nd To start with, we begin by defining a set of disjoint sets in order to define later an equivalence relation over $L$. We proceed inductively as follows (recall that $L$ is finite):
\begin{align*}
& L_0=\{x\in L \mid \nexists y\in L \textit{ such that } x<y  \} \\
&L_1=\{x\in L  \mid x\notin L_0 , \nexists y\in L\setminus L_0 \textit{ such that } x<y \}\\
& \quad \vdots \\
&L_i=\{x\in L \mid x\notin L_0\cup\cdots\cup L_{i-1}, \\
&  \qquad \qquad \nexists y\in L\setminus (L_0\cup\cdots\cup L_{i-1}) \textit{ such that } x<y  \}\\
& \quad \vdots 
\end{align*}
Each $L_i$ is called the $i$-level of $L$. The underlying idea under the levels is to take the values of $L$ in  decreasing order from the top elements. 
The following lemma can be shown:

\begin{lemma}\label{lemma1}
Let $(L,\leq)$ be a poset.
\begin{enumerate}
\item\label{lemma1:1}  If $x\in L$ belongs to the level $L_k$ then $x\notin L_i$ for all $i\neq k$.
\item \label{lemma1:2} If $x< y$, $x\in L_i$ and $y\in L_j$ then $i>j$. 
\item \label{lemma1:3} If $x$ and $y$ belong to the same level then $x$ and $y$ are incomparable.
\end{enumerate}
\end{lemma}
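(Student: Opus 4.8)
The plan is to prove the three items in order, each following fairly directly from the inductive definition of the levels $L_i$.

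For item (\ref{lemma1:1}), the key observation is that the levels are defined to be pairwise disjoint by construction: when we define $L_i$, we explicitly exclude all elements of $L_0\cup\cdots\cup L_{i-1}$. So if $x\in L_k$ and $i\neq k$, then either $i<k$, in which case $x\notin L_i$ because $x$ was not removed at stage $i$ (otherwise it would not have been available to be placed in $L_k$), or $i>k$, in which case $x\notin L_i$ because the definition of $L_i$ explicitly requires $x\notin L_0\cup\cdots\cup L_{i-1}\supseteq L_k\ni x$. I would write this up as a short argument using the explicit exclusion clauses.

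For item (\ref{lemma1:2}), suppose $x<y$ with $x\in L_i$ and $y\in L_j$. The main step is to show $j<i$. Since $x\in L_i$, by definition there is no $z\in L\setminus(L_0\cup\cdots\cup L_{i-1})$ with $x<z$; but $x<y$, so $y$ must lie in $L_0\cup\cdots\cup L_{i-1}$, i.e.\ $y\in L_j$ with $j\leq i-1<i$, hence $i>j$. (Here I would note that the case $i=0$ is vacuous since no element above $x$ exists when $x\in L_0$, so in fact $x<y$ forces $i\geq 1$.) Strictly speaking one should also argue $i\neq j$, but that is immediate from $i>j$, or alternatively from item (\ref{lemma1:3}) once it is proved.

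For item (\ref{lemma1:3}), suppose $x,y$ are in the same level $L_k$ and, for contradiction, that they are comparable, say $x<y$ (the case $y<x$ is symmetric; $x=y$ is excluded since then they are trivially not "incomparable" in the strict sense, or we simply take $x\neq y$). Then item (\ref{lemma1:2}) applies and gives $k>k$, a contradiction. I expect item (\ref{lemma1:2}) to be the part requiring the most care, since it is the one that actually uses the $\nexists y\in L\setminus(L_0\cup\cdots\cup L_{i-1})$ clause in an essential way and needs the small observation that $x<y$ forces $y$ into a strictly earlier level; items (\ref{lemma1:1}) and (\ref{lemma1:3}) are then essentially bookkeeping and an immediate corollary, respectively. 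One should also remark at the outset that, since $L$ is finite, the construction terminates and every element of $L$ lands in some level, so the statements are non-vacuous.
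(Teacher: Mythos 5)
Your proof is correct and complete. The paper states this lemma without proof (it only says ``the following lemma can be shown''), and your argument --- disjointness of the levels from the explicit exclusion clauses for item 1, the $\nexists$ clause forcing any $y$ above $x\in L_i$ into some $L_j$ with $j\leq i-1$ for item 2, and item 3 as an immediate corollary of item 2 --- is exactly the intended direct unwinding of the definition, so there is nothing to flag.
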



\nd The levels also satisfy the following property: 
 
\par\begin{proposition}\label{prop1}
Let $(L,\leq)$ be a finite poset. If $L_0\cup\cdots\cup L_{i-1}\neq L$ then $L_i\neq \emptyset$.
\end{proposition}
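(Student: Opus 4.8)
The plan is to exhibit an explicit element of $L_i$ under the hypothesis $L_0\cup\cdots\cup L_{i-1}\neq L$. First I would set $R = L\setminus(L_0\cup\cdots\cup L_{i-1})$, which is nonempty by hypothesis and finite since $L$ is finite, and equip it with the order inherited from $(L,\leq)$. The key point is the standard fact that a finite nonempty poset always has a maximal element; I will apply this to $R$.

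To justify that $R$ has a maximal element, I would use a short termination argument: pick any $x_0\in R$; if $x_0$ is not maximal in $R$, choose $x_1\in R$ with $x_0<x_1$; iterating yields a strictly increasing sequence $x_0<x_1<x_2<\cdots$ in $R$ whose terms are pairwise distinct (by antisymmetry and transitivity), contradicting the finiteness of $R$. Equivalently, one may invoke that a maximal chain of the finite poset $R$ has a greatest element, which is then maximal in $R$.

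Let $x$ be such a maximal element of $R$. By construction $x\notin L_0\cup\cdots\cup L_{i-1}$, and the maximality of $x$ in $R$ says precisely that there is no $y\in R = L\setminus(L_0\cup\cdots\cup L_{i-1})$ with $x<y$. These are exactly the two defining clauses for membership in $L_i$, so $x\in L_i$, and therefore $L_i\neq\emptyset$.

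The argument is essentially routine; the only step needing a little care is the existence of a maximal element in a finite poset, which genuinely uses finiteness — the statement fails for infinite posets such as $\mathbb{N}$ with its usual order — so the hypothesis that $L$ is finite is indispensable here. I do not anticipate any real obstacle.
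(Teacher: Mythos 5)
Your proposal is correct and is essentially the paper's own argument: the paper likewise starts from an element of $L\setminus(L_0\cup\cdots\cup L_{i-1})$ and iterates $x_0<x_1<x_2<\cdots$ until finiteness forces the chain to stop at some $x_k\in L_i$, which is exactly your maximal-element construction phrased without isolating the lemma. Your reformulation of $L_i$ as the set of maximal elements of $R$ is accurate and adds a little clarity, but the underlying proof is the same.
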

\begin{proof} Suppose $L_0\cup\cdots\cup L_{i-1}\neq L$, then there is an element $x_0\in L\setminus (L_0\cup\cdots\cup L_{i-1})$. If $x_0\notin L_i$ then there exists $x_1\in L\setminus (L_0\cup\cdots\cup L_{i-1})$ such that $x_0<x_1$. If $x_1\notin L_i$ then there exists $x_2\in  L\setminus (L_0\cup\cdots\cup L_{i-1})$ such that $x_0<x_1<x_2$. This construction cannot be infinite since $L$ is finite and therefore there exists $k\in\mathbb{N}$ such that $x_k\in L_i$.\qed
\end{proof}

\nd Note that Proposition~\ref{prop1} does not hold if the poset is not finite. For example, $[0,1]$ with the usual order does not satisfy Proposition \ref{prop1}, since $L_1=\emptyset$ and $L_0=\{1\}\neq [0,1]$.


\begin{proposition}\label{prop2}
If $L_i=\emptyset$ then $L_{i+1}=\emptyset$ as well.
\end{proposition}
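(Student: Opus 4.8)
The plan is to argue directly from the inductive definition of the levels, exploiting the fact that an empty level does not change the accumulated union. Concretely, the defining clause for $L_{i+1}$ reads
\[
L_{i+1}=\{x\in L \mid x\notin L_0\cup\cdots\cup L_{i},\ \nexists y\in L\setminus (L_0\cup\cdots\cup L_{i}) \textit{ such that } x<y\}.
\]
If $L_i=\emptyset$, then $L_0\cup\cdots\cup L_i=L_0\cup\cdots\cup L_{i-1}$, so this set-builder expression becomes verbatim the one defining $L_i$. Hence $L_{i+1}=L_i=\emptyset$, which is exactly the claim.

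If one prefers to route through what has already been proved, the same conclusion follows from the contrapositive of Proposition~\ref{prop1}: since $L_i=\emptyset$, we must have $L_0\cup\cdots\cup L_{i-1}=L$, and therefore, using $L_i=\emptyset$ again, $L_0\cup\cdots\cup L_i=L$ as well. Then $L\setminus(L_0\cup\cdots\cup L_i)=\emptyset$, so in particular no $x\in L$ can satisfy the requirement $x\notin L_0\cup\cdots\cup L_i$ appearing in the definition of $L_{i+1}$; thus $L_{i+1}=\emptyset$.

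I do not expect any real obstacle here: the statement is an immediate bookkeeping consequence of the definitions (or of Proposition~\ref{prop1}), and the only point needing a moment's care is keeping the index ranges of the cumulative unions straight, together with the degenerate case $i=0$ (where $L_0\cup\cdots\cup L_{i-1}$ is the empty union, so $L_0=\emptyset$ already forces $L=\emptyset$ and every level is trivially empty). A worthwhile remark to append afterwards is that, combined with Proposition~\ref{prop1}, this shows the nonempty levels form an initial segment $L_0,\dots,L_m$ and partition $L$, which is what the subsequent construction of the equivalence relation will rely on.
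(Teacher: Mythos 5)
Your first argument is exactly the paper's proof: since $L_i=\emptyset$, the cumulative union $L_0\cup\cdots\cup L_i$ equals $L_0\cup\cdots\cup L_{i-1}$, so the defining expression for $L_{i+1}$ coincides with that for $L_i$, giving $L_{i+1}=L_i=\emptyset$. The alternative route via Proposition~\ref{prop1} and the remark about index bookkeeping are fine additions but not needed; the proposal is correct and matches the paper's approach.
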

\begin{proof}
As $L_i=\emptyset$ then:
\begin{eqnarray*}
L_{i+1} & = & \{x\in L \mid x\notin L_0\cup\cdots\cup L_{i}, \ \nexists y\in L\setminus (L_0\cup\cdots\cup L_{i}) \textit{ such that } x<y  \} \\
&= &\{x\in L \mid x\notin L_0\cup\cdots\cup L_{i-1},  \ \nexists y\in L\setminus (L_0\cup\cdots\cup L_{i-1}) \textit{ such that } x<y  \} \\ 
&=& L_i  \ .
\end{eqnarray*} \qed
\end{proof}

%

\begin{corollary}\label{cor1}
Let $(L,\leq)$ be a finite poset. Each $x\in L$ belongs to one, and only one, level $L_i$ of $L$.
\end{corollary}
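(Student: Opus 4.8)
The plan is to combine Proposition~\ref{prop1}, Proposition~\ref{prop2}, and part~\ref{lemma1:1} of Lemma~\ref{lemma1}. Part~\ref{lemma1:1} already gives the ``only one'' half of the statement: if $x$ lies in some level $L_k$, then $x\notin L_i$ for every $i\neq k$, so no element can belong to two distinct levels. Hence the real work is to show that every $x\in L$ belongs to \emph{at least} one level.

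First I would argue that the union of all levels exhausts $L$. Since $L$ is finite, the sequence of sets $L_0\cup\cdots\cup L_{i-1}$ is nondecreasing and bounded above by $L$, so it stabilises: there is some $n$ with $L_0\cup\cdots\cup L_n = L_0\cup\cdots\cup L_{n+1}$, i.e.\ $L_{n+1}\subseteq L_0\cup\cdots\cup L_n$. But by part~\ref{lemma1:1} the levels are pairwise disjoint, so $L_{n+1}=\emptyset$. Now I would invoke the contrapositive of Proposition~\ref{prop1}: if $L_{n+1}=\emptyset$ then we cannot have $L_0\cup\cdots\cup L_n\neq L$, hence $L_0\cup\cdots\cup L_n = L$. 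Therefore every $x\in L$ lies in $L_i$ for some $i\le n$, and by part~\ref{lemma1:1} this $i$ is unique.

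An alternative, perhaps cleaner, route avoids talking about stabilisation: take an arbitrary $x\in L$ and let $i$ be the least index such that $x\in L_0\cup\cdots\cup L_i$ — such an $i$ exists provided the levels eventually cover $L$, which follows because if $L_0\cup\cdots\cup L_{i-1}\neq L$ for all $i$ then by Proposition~\ref{prop1} all levels $L_i$ are nonempty and pairwise disjoint (Lemma~\ref{lemma1}.\ref{lemma1:1}), forcing $L$ to be infinite, a contradiction. Once the levels cover $L$, the chosen $x$ lies in $L_i$ but not in $L_0\cup\cdots\cup L_{i-1}$, and Lemma~\ref{lemma1}.\ref{lemma1:1} again gives uniqueness.

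I expect the main (minor) obstacle to be purely expository: making the finiteness argument rigorous without circularity, since Proposition~\ref{prop1} is phrased as ``if $L_0\cup\cdots\cup L_{i-1}\neq L$ then $L_i\neq\emptyset$'' and one must be careful to use it in contrapositive form together with disjointness to conclude coverage. Everything else is a direct citation of the already-established results, so the proof should be only a few lines.
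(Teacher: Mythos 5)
Your argument is correct and matches the paper's intent: the paper states this corollary without an explicit proof, presenting it as an immediate consequence of Proposition~\ref{prop1}, Proposition~\ref{prop2} and Lemma~\ref{lemma1}.\ref{lemma1:1}, which is exactly the combination (finiteness plus disjointness forces the levels to cover $L$, and Lemma~\ref{lemma1}.\ref{lemma1:1} gives uniqueness) that you spell out. There is no circularity issue, since Proposition~\ref{prop1} is established independently of the corollary.
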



\nd Corollary \ref{cor1} shows us that we can define an equivalence relation over a finite poset $(L,\leq)$ by identifying the elements which belong to the same level. That is, let $x$ and $y$ be two elements in $L$, we say that $x\sim y$ if and only if $x$ belongs to the same level of $y$. In other words, $x\sim y $ if and only if there exists $i\in \mathbb{N}$ such that $x,y\in L_i$. It is easy to proof that the relation $\sim$ is actually an equivalence relation since each element $x\in L$ belongs to one, and only one, level of $L$. Hence, we can consider the set $L/_\sim$, where the elements are the equivalence classes of $\sim$. In this way, $[x]$ denotes the equivalence class of $x\in L$, i.e the level which contains $x$. We also write $[L_{i}]$ in place of $[x_{i}]$ for any $x_{i} \in L_{i}$. 
Over $L/_\sim$  we can define also the following order relation:
$$
[x]\leq_{L/_\sim}[y] \quad \iff \quad x\in L_i , y\in L_j \textit{ and } j\leq i \ .
$$
It easy to proof that the above relation $\leq_{L/_\sim}$ is effectively an order relation. In fact, $\leq_{L/_\sim}$ defines a linear ordering over $L/_\sim$.  At this point, we are able to define the linearisation of a finite poset.

\begin{definition}
Let  $(L,\leq)$ be a finite poset. The linear poset $(L/_\sim,\leq_{L/_\sim})$ defined above is called de \emph{linearisation} of $(L,\leq)$.
\end{definition}

\nd To the ease of representation, we denote by $(\calL,\order)$ the linearisation of $(L,\leq)$ and have by definition $[L_{i}] \order [L_{j}]$ iff $j \leq i$. In the following, we illustrate some linearisation examples.

 \begin{example}\label{posetOne}
Consider the poset represented by the following graph: 
\begin{center}
\includegraphics[scale=0.175]{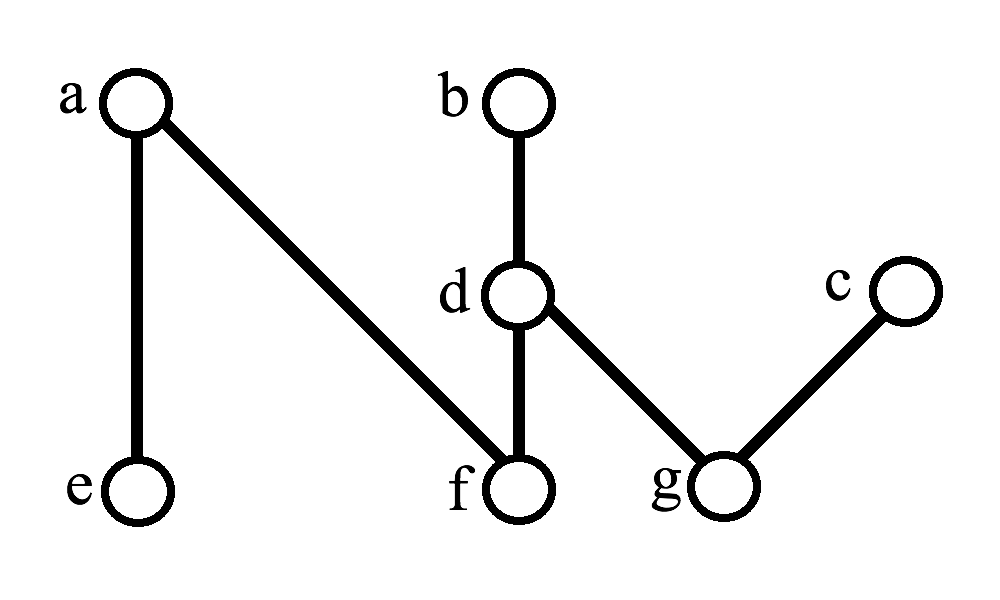}
\end{center}
The levels of this poset are $L_0=\{a,b,c\}$, $L_1=\{d,e\}$, $L_2=\{f,g\}$ and $L_i=\emptyset$ for $i\geq 3$. Then the linearisation of this poset is the chain $[f] \sorder [d] \sorder[a]$. \qed
\end{example}


 \begin{example}\label{exabc}
Consider the poset represented by the following graph: 
\begin{center}
\includegraphics[scale=0.15]{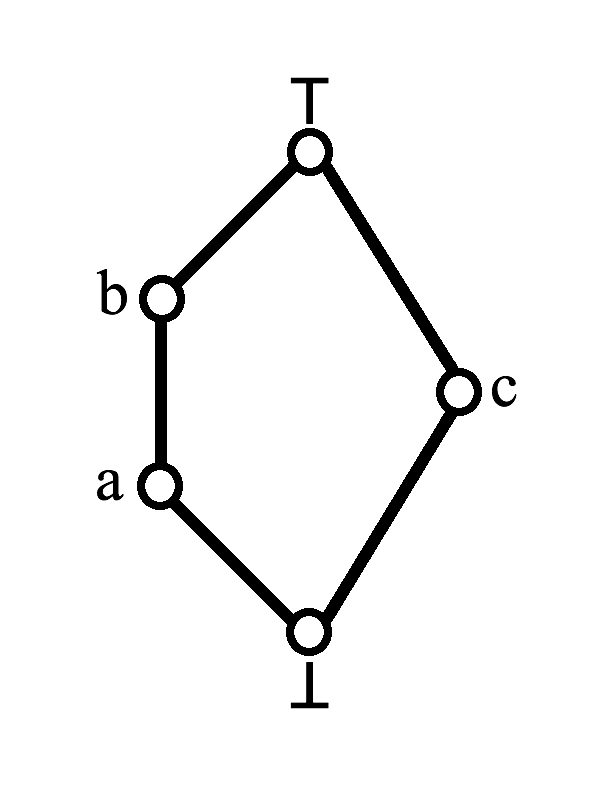}
\end{center}
Then the linearisation of this lattice is the chain $[\bot]\sorder [a]  \sorder ([b]=[c])  \sorder [\top]$. 
\qed
\end{example}

\nd Our linearisation satisfies the following properties. 

\begin{proposition}\label{prop4}
Let $(L,\leq)$ be a finite poset and let $(\calL, \order)$ be its linearisation. Then the following properties hold: 
\begin{enumerate}
\item \label{1} If $(L,\leq)$ is a linear poset then $(\calL,\order) \equiv (L,\leq)$.
\item \label{2} If $x\leq y$ then $[x]\order [y]$. More specifically: if $x < y$ then $[x] \sorder [y]$.
\item \label{3} If $[x]\order [y]$, then for all $z\in[x]$ there exists $z'\in [y]$ such that $z\leq z'$. More specifically: if $[x] \sorder [y]$, then for all $z\in[x]$ there exists $z'\in[y]$ such that $z < z'$.
\end{enumerate}
\end{proposition}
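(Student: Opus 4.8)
The plan is to prove the three parts in the order stated, since each is essentially a direct unwinding of the definition of the levels $L_i$ and of the order $\order$ on $\calL = L/_\sim$, using Lemma~\ref{lemma1} and Corollary~\ref{cor1} as the main tools.

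For part~\ref{1}, suppose $(L,\leq)$ is linear. I would first observe that in a finite linear poset every level $L_i$ is a singleton: $L_0$ is the set of maximal elements, which in a chain is a single top element; inductively, once $L_0,\dots,L_{i-1}$ are removed, $L_i$ is again the set of maximal elements of the remaining chain, hence a singleton (and by Proposition~\ref{prop1} the construction exhausts $L$). Thus the map $p\colon x \mapsto [x]$ is a bijection from $L$ to $\calL$. It remains to check it is order-preserving and order-reflecting: if $x \leq y$ with $x \in L_i$, $y \in L_j$, then by Lemma~\ref{lemma1}(\ref{lemma1:2}) (or its reflexive analogue, noting Lemma~\ref{lemma1}(\ref{lemma1:1})) we get $j \leq i$, which is exactly $[x] \order [y]$; conversely $[x]\order[y]$ means $j \leq i$, and since in a chain any two elements are comparable, $x \leq y$ would follow — here I would just note that $y \leq x$ together with $j\le i$ forces $i \le j$ by the same lemma, hence $i=j$, hence $x=y\le y$. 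So $p$ is an order isomorphism, i.e.\ $(\calL,\order) \equiv (L,\leq)$.

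For part~\ref{2}, take $x \leq y$. If $x = y$ the claim $[x]\order[y]$ is trivial by reflexivity. If $x < y$, let $x \in L_i$ and $y \in L_j$ (well-defined by Corollary~\ref{cor1}); Lemma~\ref{lemma1}(\ref{lemma1:2}) gives $i > j$, so $j < i$, which by definition of $\order$ is precisely $[x] \sorder [y]$. This also covers the general non-strict case. For part~\ref{3}, suppose $[x]\order[y]$, say $x \in L_i$, $y \in L_j$ with $j \leq i$, and fix $z \in [x] = L_i$. I would argue by downward induction on the level: if $j = i$ then $[x]=[y]$ and $z' = z$ works with $z \leq z'$. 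For the strict case $j < i$, the key step is to show that every element of $L_i$ lies below some element of $L_{i-1}$: indeed, if $z \in L_i$, then by the very definition of $L_i$, $z \notin L_0 \cup \cdots \cup L_{i-1}$, but $z$ does \emph{not} satisfy the defining condition of being absent from level $i-1$ for the wrong reason — more carefully, since $z \in L_i$ and $i \ge 1$, $z \notin L_{i-1}$, and the only way $z$ fails to be in $L_{i-1}$ while not being in $L_0\cup\cdots\cup L_{i-2}$ is that there exists $w \in L \setminus (L_0\cup\cdots\cup L_{i-2})$ with $z < w$; such $w$ cannot lie in $L_i$ or below (that would contradict Lemma~\ref{lemma1}(\ref{lemma1:2})), so $w \in L_{i-1}$. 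Iterating this from level $i$ down to level $j$ and chaining the strict inequalities via transitivity produces $z' \in L_j = [y]$ with $z < z'$; in the non-strict formulation we allow $j = i$ and get $z \le z'$.

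The main obstacle is the inductive step in part~\ref{3}: one must argue carefully that an element at level $i$ necessarily has a strict upper neighbour at level exactly $i-1$, rather than merely "somewhere higher." This requires using the definition of $L_{i-1}$ in the contrapositive direction — "$z$ is not in $L_{i-1}$, and $z$ is not yet consumed by earlier levels, hence the only remaining reason is the existence of a witness $w$ with $z < w$" — and then pinning down the level of that witness via Lemma~\ref{lemma1}(\ref{lemma1:2}). The other parts are essentially bookkeeping with the definitions and the already-established lemma.
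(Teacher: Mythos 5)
Your proposal is correct and follows essentially the same route as the paper: part~\ref{2} is read off from Lemma~\ref{lemma1}, and part~\ref{3} is proved by the same key step of using the definition of $L_{i-1}$ in the contrapositive to find, for each $z\in L_i$, a witness $w$ with $z<w$ that must lie exactly in $L_{i-1}$, then chaining these strict inequalities down to level $j$. The only cosmetic differences are that you spell out part~\ref{1} (which the paper leaves as ``easy to check'') and that you pin down the witness's level via Lemma~\ref{lemma1} rather than directly via the defining condition of $L_i$; both are equivalent.
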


\begin{proof}
We begin with the proof of item (\ref{1}). If $\leq$ determines a linear ordering in $L$ then $(L,\leq)$ has a chain structure, i.e $(L,\leq)\equiv \bot < x_{1} < \dots < x_{n} =\top$ for some $n\in\mathbb{N}$. It is easy to check that $(\calL,\order)$ is $[\bot] \sorder [x_{1}] \sorder \dots \sorder [x_{n}] = [\top]$, with $[\bot] = \{\bot\}, [\top] = \{\top\}$ and $[x_{i}] = \{x_{i}\}$ for all $1 \leq i \leq n$. So, $(L,\leq) \equiv (\calL,\order)$.

Item (\ref{2}) is a direct consequence of Lemma \ref{lemma1}. 
 
Concerning item (\ref{3}), suppose $[x] \sorder [y]$ (the case $[x]=[y]$ holds trivially). Let $L_i$ be the level associated with the class of $[x]$. Consider $z\in L_i$. Then, as $z\notin L_{i-1}$, there exists $z_1\in  L\setminus (L_0\cup\cdots\cup L_{i-2})$ such that $z<z_1$. Moreover, as $z\in L_i$ there does not exist $t\in L\setminus (L_0\cup\cdots\cup L_{i-1})$ such that $z<t$. In other words, $z_1$ belongs to $L_{i-1}$. If we continue with this procedure, we can build a chain as follows:
$$
z<z_1<\cdots<z_j<\cdots<z_i=\top
$$
where each $z_j\in L_{i-j}$. To end the proof, we only have to select the element $z_k$ in the chain such that $z_k\in[y]$.\qed
\end{proof} 

\begin{remark}
Firstly we have just obtained the aim described in the introduction by taking the projection $p:L\to\calL$ which assigns to each $x\in L$ the level $[x]\in \calL$. Secondly, the item \ref{3} implies that $\preceq$ defines a $H$-ordering among the levels of a finite poset. \qed
\end{remark}

\begin{proposition}\label{cardinality}
Let $(L,\leq)$ be a finite poset, then the number of elements of $(\calL,\preceq)$ coincides with the maximum among the cardinalities of chains in $(L,\leq)$.
\end{proposition}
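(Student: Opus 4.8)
The plan is to establish the equality through a lower and an upper bound, after first pinning down what "the number of elements of $(\calL,\order)$" is. By Proposition~\ref{prop1}, Proposition~\ref{prop2} and Corollary~\ref{cor1}, since $L$ is finite there is a unique $n\in\mathbb{N}$ such that $L_0,\dots,L_n$ are non-empty while $L_{n+1}=\emptyset$; consequently $(\calL,\order)$ has exactly the $n+1$ elements $[L_0]\siorder\cdots\siorder[L_n]$. So the claim reduces to showing that the maximum length of a chain in $(L,\leq)$ equals $n+1$.

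For the upper bound I would take an arbitrary chain $x_0<x_1<\cdots<x_m$ in $(L,\leq)$ and, using Corollary~\ref{cor1}, let $i_k$ be the unique index with $x_k\in L_{i_k}$. Since $x_k<x_{k+1}$, Lemma~\ref{lemma1}(\ref{lemma1:2}) gives $i_k>i_{k+1}$, so $i_0>i_1>\cdots>i_m$ is a strictly decreasing sequence in $\{0,1,\dots,n\}$; hence $m\leq n$, i.e. the chain has at most $n+1$ elements.

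For the lower bound I would exhibit a chain with exactly $n+1$ elements by choosing any $z_0\in L_n$ and running the descending construction already used in the proof of Proposition~\ref{prop4}(\ref{3}): from an element of $L_i$ with $i\geq 1$ it produces an element of $L_{i-1}$ lying strictly above it. Iterating yields $z_0<z_1<\cdots<z_n$ with $z_j\in L_{n-j}$, in particular $z_n\in L_0$; these are $n+1$ distinct elements, so the maximum chain length is at least $n+1$. Combining the two bounds finishes the proof. I do not expect a genuine obstacle here — both directions are short and rest only on Lemma~\ref{lemma1}(\ref{lemma1:2}) and the chain-building argument of Proposition~\ref{prop4}(\ref{3}) — the only point requiring a little care is to start the descending construction at the deepest non-empty level $L_n$, so that the chain produced meets every level $L_0,\dots,L_n$.
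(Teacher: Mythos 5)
Your proof is correct and follows essentially the same route as the paper: both directions are established as in the paper's own argument, with the lower bound obtained by climbing level-by-level from the deepest non-empty level via Proposition~\ref{prop4}(\ref{3}), and the upper bound from the fact that comparable elements must lie in distinct levels (you invoke Lemma~\ref{lemma1}(\ref{lemma1:2}), the paper invokes the incomparability within a level, which is the same fact). Your version is in fact slightly more careful in first identifying the number of levels as $n+1$ before bounding.
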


\begin{proof}
As $(L,\leq)$ is finite, every chain has a finite length. Let $C$ be a chain $x_1<\cdots<x_n$ with maximal cardinality. Then, as the elements in the same level have to be incomparable, necessarily there exists at least $n$ levels; \ie, $\calL$ has at least the same number of elements than the cardinality of $C$.

On the other hand, if the number of elements of $\calL$ is $n$, $\calL$ coincides with the chain 
$[x_{n-1}] \sorder [x_{n-2}] \sorder \cdots \sorder [x_1] \sorder [x_0]$ for arbitrary $x_{i} \in L_{i}$. 

Take an element $y_1\in L_{n-1}$. By Proposition \ref{prop4} item \ref{3}, there is an element  $y_2\in L_{n-2}$ such that $y_2>y_1$. Similarly, by using the same result, there is an element $y_3\in L_{n-3}$ such that $y_3>y_2>y_1$. By following this procedure we can build a chain $y_n>y_{n-1}>\cdots>y_2>y_1$ with cardinality that is then the number of elements of $\calL$. Therefore the maximun among the cardinalities of chains of $(L,\leq)$ is greater or equal than the number of elements of $\calL$.

Unifying both inequalities, we conclude than the number of elements of $(\calL,\order)$ coincides with the maximum among the cardinalities of chains in  $(L,\leq)$.\qed
\end{proof}

\subsection{The Dual-linearisation}\label{duallinearisation}

\nd Clearly, the levels defined in the above section can be defined in a dual way.  More exactly, we can define the dual-levels by taking incomparable elements in increasing order instead of decreasing order as in the procedure describe in the previous section. Therefore a dual linearisation procedure can be defined. In this section we will describe the dual-linearisation and we will provide results (or dual-results). 

Let $(L,\leq)$ be a poset, we define the dual-levels as follows:
\begin{align*}
&L_0^*=\{x\in L \mid \nexists y\in L \textit{ such that } x>y \}\\
&L_1^*=\{x\in L \mid x\notin L_0^* , \nexists y\in L\setminus L_0^* \textit{ such that } x>y \}\\
&\quad \vdots\\
& L_i^*=\{x\in L \mid x\notin L_0^*\cup\cdots\cup L_{i-1}^*, \\
&  \qquad \qquad\ \ \nexists y\in L/(L_0^*\cup\cdots\cup L_{i-1}^*) \textit{ such that } x>y  \}\\
&\quad \vdots \ 
\end{align*}

\nd Each $L_i^*$ is called the $i$-dual-level of $L$. Lemma \ref{lemma1} is rewritten for dual-levels as follows: 

\begin{lemma}\label{lemma2}
Let $(L,\leq)$ be a poset.
\begin{enumerate}
\item\label{lemma2:1}  If $x\in L$ belongs to the level $L_k$ then $x\notin L_i$ for all $i\neq k$.
\item \label{lemma2:2} If $x< y$, $x\in L_i$ and $y\in L_j$ then $i<j$. 
\item \label{lemma2:3} If $x$ and $y$ belong to the same level then $x$ and $y$ are incomparable.
\end{enumerate}
\end{lemma}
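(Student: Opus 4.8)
The plan is to prove the three items exactly as stated, reading them as the dual-level reformulation of Lemma~\ref{lemma1}: throughout this subsection the levels in question are the dual-levels $L_i^{*}$ just defined, so the whole argument mirrors the reasoning behind Lemma~\ref{lemma1} with the roles of top and bottom exchanged. The one point to keep firmly in mind is that the dual-level of index $0$ collects the \emph{minimal} elements, so the dual-level index \emph{grows} as one moves upward; this is precisely why item~(\ref{lemma2:2}) must carry the inequality $i<j$, the reverse of the $i>j$ appearing in Lemma~\ref{lemma1}.

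First I would dispose of item~(\ref{lemma2:1}). This is immediate from the inductive construction: each $L_i^{*}$ is defined to contain only elements lying outside $L_0^{*}\cup\cdots\cup L_{i-1}^{*}$, so no element can occur in two distinct dual-levels. Equivalently, one invokes the dual analogues of Proposition~\ref{prop1}, Proposition~\ref{prop2} and Corollary~\ref{cor1}, which jointly guarantee that every $x\in L$ belongs to exactly one dual-level; their proofs transfer verbatim upon replacing $<$ by $>$ and $L_i$ by $L_i^{*}$.

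The crux is item~(\ref{lemma2:2}), and the only real care needed is to read the defining clause of $L_j^{*}$ in the correct direction. Suppose $x<y$ with $x\in L_i^{*}$ and $y\in L_j^{*}$. By definition of $L_j^{*}$, there is no element of $L\setminus(L_0^{*}\cup\cdots\cup L_{j-1}^{*})$ lying strictly below $y$. Since $x<y$, that is $y>x$, the element $x$ cannot lie outside $L_0^{*}\cup\cdots\cup L_{j-1}^{*}$; hence $x\in L_0^{*}\cup\cdots\cup L_{j-1}^{*}$, so $x$ belongs to some dual-level of index strictly less than $j$. By item~(\ref{lemma2:1}) that index is uniquely determined and equals $i$, whence $i<j$, exactly as stated. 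I expect this to be the main obstacle, not because it is long but because it is the point where the direction of the inequality is easily flipped: the dual-levels are built with the clause $x>y$, so that clause constrains the elements \emph{below} $y$, and the bottom-up indexing forces the smaller element into the lower-indexed dual-level.

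Finally, item~(\ref{lemma2:3}) follows at once from item~(\ref{lemma2:2}). If two elements $x$ and $y$ lay in the same dual-level $L_i^{*}$ and were comparable, say $x<y$, then item~(\ref{lemma2:2}) would yield $i<i$, a contradiction; the case $y<x$ is symmetric. Hence any two elements of a common dual-level are incomparable. Alternatively this can be read straight off the defining clause of $L_i^{*}$, since $x,y\in L_i^{*}$ with $x<y$ would exhibit an element $x\in L\setminus(L_0^{*}\cup\cdots\cup L_{i-1}^{*})$ lying strictly below $y\in L_i^{*}$, contradicting the definition of $L_i^{*}$.
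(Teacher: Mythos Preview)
Your proof is correct, and it matches the paper's own treatment: the paper does not spell out a proof of Lemma~\ref{lemma2} (nor of Lemma~\ref{lemma1}), but simply presents it as the dual reformulation of Lemma~\ref{lemma1}, which is exactly the route you take. Your careful handling of the direction of the inequality in item~(\ref{lemma2:2}) is precisely the point the paper leaves implicit.
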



\nd Propositions~\ref{prop1} and \ref{prop2} and  Corollary~\ref{cor1} could be rewriting (and prooved) substituting directly  the levels $L_i$ by the dual levels $L^*_i$. This feature allows us to define the equivalence relation $\sim_* $ as follows: $x\sim_* y$ if and only if $x$ and $y$ belong to the same dual-level. Hence, we can consider the quotient set $L/_{\sim_{*}}$ with the ordering:
$$
[x]\order_* [y] \ \textit{ if and only if } \ x\in L^*_i \ , \ y\in L^*_j \ \textit{and}\ i\leq j.
$$
As  for the order relation $\order$ defined over $\calL$, $\order_*$ defines a linear ordering in $L/_{\sim_{*}}$.

\begin{definition}
Let  $(L,\leq)$ be a finite poset. The linear poset $(L/_{\sim_{*}},\order_*)$ defined above is called de \emph{dual-linearisation} of $(L,\leq)$.
\end{definition}
  
\nd We use the notation $(\calL^*,\order_*)$ to denote the dual-linearisation of  $(L,\leq)$, and have by definition $[L_{i}] \order_{*} [L_{j}]$ iff $i \leq j$,
where we write $[L^{*}_{i}]$ in place of $[x_{i}]$ for any $x_{i} \in L^{*}_{i}$.
Obviously, the dual-linearisation and the linearisation may not coincide.
 
 \begin{example}\label{exampleABCLatticeDual}
 Consider the poset given in Example \ref{exabc}.  Then dual-linearisation is the chain:
 \[
 [\bot] \order_* ([c] = [a]) \order_* [b] \order_* [\top] \ .
\]
 Note that in the linearisation $[a]\neq[c]$ whereas in the dual-linearisation $[a]=[c]$. \qed
 \end{example}
 
\nd Proposition $\ref{prop4}$ can be rewriting for the dual-linearisation by  changing slightly item $3$. 

\begin{proposition}\label{prop4dual} 
Let $(L,\leq)$ be a finite poset and let \ $(\calL^*,\order_*)$ be its dual-linearisation. Then the following properties hold: 
\begin{enumerate}
\item \label{b1} If $(L,\leq)$ is a linear poset then $(\calL^*,\order_*) \equiv (L,\leq_L)$.
\item \label{b2} If $x\leq y$ then $[x] \order_*[y]$. More specifically: if $x < y$ then $[x] \sorder_* [y]$.
\item \label{b3} If $[x]\order_* [y]$, then for all $z\in[y]$ there exists $z' \in [x]$ such that $z'\leq y$. More specifically: if $[x] \sorder_*[y]$, then for all $z\in[y]$ there exists $z'\in[x]$ such that $z' < z$.
\end{enumerate}
\end{proposition}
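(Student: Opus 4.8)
The plan is to prove Proposition~\ref{prop4dual} by exactly mirroring the proof of Proposition~\ref{prop4}, substituting the dual-levels $L^*_i$ for the levels $L_i$ and reversing the relevant inequalities. For item~(\ref{b1}), if $(L,\leq)$ is linear then $(L,\leq)\equiv \bot < x_1 < \dots < x_n = \top$, and a direct inspection of the dual-level construction shows $L^*_0 = \{\bot\}$, $L^*_i = \{x_i\}$, so each equivalence class is a singleton and $(\calL^*,\order_*)$ is order-isomorphic to $(L,\leq)$ via $x_i \mapsto [x_i]$. For item~(\ref{b2}), this is an immediate consequence of Lemma~\ref{lemma2}, part~(\ref{lemma2:2}): if $x < y$ with $x\in L^*_i$ and $y\in L^*_j$ then $i < j$, hence $[x] \sorder_* [y]$ by definition of $\order_*$; the weak version follows since $x \leq y$ means $x < y$ or $x = y$.

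The main work, as in the original, is item~(\ref{b3}). The approach is dual to the proof of Proposition~\ref{prop4}(\ref{3}): assume $[x] \sorder_* [y]$ (the case $[x] = [y]$ being trivial), let $L^*_j$ be the dual-level of $y$, and pick $z \in L^*_j$. Since $z \notin L^*_{j-1}$, by the defining condition of $L^*_j$ there exists $z_1 \in L \setminus (L^*_0 \cup \cdots \cup L^*_{j-2})$ with $z_1 < z$; and since $z \in L^*_j$, there is no $t \in L \setminus (L^*_0 \cup \cdots \cup L^*_{j-1})$ with $t < z$, which forces $z_1 \in L^*_{j-1}$. Iterating downward yields a chain
\[
\bot = z_j < z_{j-1} < \cdots < z_1 < z
\]
with $z_k \in L^*_{j-k}$, and then we select the index $k$ for which $z_k \in [y]$'s predecessor class, i.e.\ the one coinciding with $[x]$; this $z_k$ is the required $z' \in [x]$ with $z' < z$. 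The weak version follows by including the case $[x]=[y]$ with $z' = z$.

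I expect the downward-chain construction in item~(\ref{b3}) to be the only step requiring genuine care, precisely because the statement of item~(\ref{b3}) is worded ``the other way around'' from item~(\ref{prop4}:3) — here we descend from $[y]$ into $[x]$ rather than ascend from $[x]$ into $[y]$ — so one must be careful to invoke the dual defining property of $L^*_j$ (existence of a strictly smaller element not yet removed) in place of the original (existence of a strictly larger one). Everything else is a routine transcription: items~(\ref{b1}) and~(\ref{b2}) are essentially formal, and the earlier dual results (the analogues of Propositions~\ref{prop1}, \ref{prop2} and Corollary~\ref{cor1}, already noted in the text as obtainable by direct substitution) guarantee that $\order_*$ is well-defined and linear, so no additional structural lemma is needed.
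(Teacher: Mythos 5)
Your proposal is correct and follows exactly the route the paper intends: the paper gives no separate proof for Proposition~\ref{prop4dual}, stating only that it is obtained by dualising Proposition~\ref{prop4}, and your item-by-item transcription (Lemma~\ref{lemma2} for item~(\ref{b2}), the downward chain $z_j < \cdots < z_1 < z$ with $z_k \in L^*_{j-k}$ for item~(\ref{b3})) is precisely that dualisation. The only cosmetic remark is that $z_j$ is a minimal element of $L$ rather than necessarily a bottom element $\bot$, but the paper's own proof of Proposition~\ref{prop4} commits the symmetric imprecision with $\top$, so your argument matches the paper's level of rigour.
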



\begin{remark}
Note that item \ref{3} in  Proposition \ref{prop4dual} shows the most important difference between the levels and the dual-levels. That is, the set of levels is $H$-ordered by $\order$ whereas the set of dual-levels is $S$-ordered by $\order_*$.  \qed
\end{remark}

\nd As in the linearisation, the number of element in the dual linearisation is determined by the length of the maximal chains in $(L,\leq)$.

\begin{proposition}\label{propdual}
Let $(L,\leq)$ be a finite poset, then the number of elements of $(\calL^*,\order_*)$ coincides with the maximum among the cardinalities of chains in $(L,\leq)$.
\end{proposition}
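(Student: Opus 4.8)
The plan is to mirror the proof of Proposition~\ref{cardinality} almost verbatim, since the dual-linearisation is just the order-theoretic dual of the linearisation and Proposition~\ref{prop4dual} supplies exactly the counterpart of Proposition~\ref{prop4} that was used there. First I would establish the inequality ``number of elements of $(\calL^*,\order_*) \geq$ length of a longest chain'': take a chain $C\colon x_1 < x_2 < \cdots < x_n$ of maximal cardinality in $(L,\leq)$; by Lemma~\ref{lemma2} item~\ref{lemma2:3} any two elements lying in the same dual-level are incomparable, so each $x_k$ must lie in a distinct dual-level $L^*_{i_k}$; hence there are at least $n$ nonempty dual-levels, and therefore at least $n$ equivalence classes in $L/_{\sim_*}$. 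This gives $|\calL^*| \geq |C|$.

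Next I would prove the reverse inequality. Suppose $(\calL^*,\order_*)$ has exactly $n$ elements; by the dual analogue of Proposition~\ref{prop1}/Proposition~\ref{prop2} (which the text already notes can be rewritten for dual-levels) the nonempty dual-levels are precisely $L^*_0, L^*_1, \dots, L^*_{n-1}$, so $\calL^*$ is the chain $[L^*_0] \sorder_* [L^*_1] \sorder_* \cdots \sorder_* [L^*_{n-1}]$. Now pick any $y_{n}\in L^*_{n-1}$. Applying Proposition~\ref{prop4dual} item~\ref{b3} to $[L^*_{n-2}] \sorder_* [L^*_{n-1}]$ yields some $y_{n-1}\in L^*_{n-2}$ with $y_{n-1} < y_{n}$; iterating, we obtain $y_{k}\in L^*_{k-1}$ with $y_{1} < y_{2} < \cdots < y_{n}$, a chain of cardinality $n$ in $(L,\leq)$. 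Hence the maximal chain cardinality is $\geq |\calL^*|$. Combining the two inequalities gives equality, which is the claim.

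The only genuine wrinkle is that Proposition~\ref{prop4dual} item~\ref{b3} gives a descending witness ($z' < z$ with $z'$ in the lower class), so one must be a little careful about the direction of the construction: we build the chain \emph{downward} from the top dual-level $L^*_{n-1}$ rather than upward as in the primal case. This is purely bookkeeping. I would also state explicitly at the start that Propositions~\ref{prop1} and~\ref{prop2} and Corollary~\ref{cor1} hold for dual-levels (already remarked in the text), so that ``$\calL^*$ has $n$ elements'' really does force the nonempty dual-levels to be an initial segment $L^*_0,\dots,L^*_{n-1}$ with no gaps. With these observations in place the proof is a routine transcription of the proof of Proposition~\ref{cardinality}, and no step presents a real obstacle.

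\begin{proof}
As $(L,\leq)$ is finite, every chain has finite length. Let $C$ be a chain $x_1 < x_2 < \cdots < x_n$ of maximal cardinality. By Lemma~\ref{lemma2} item~\ref{lemma2:3}, elements lying in the same dual-level are incomparable, so the $n$ (pairwise comparable) elements of $C$ lie in $n$ distinct dual-levels. Hence there are at least $n$ nonempty dual-levels, and therefore $\calL^*$ has at least $n = |C|$ elements.

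Conversely, suppose $\calL^*$ has exactly $n$ elements. Since Propositions~\ref{prop1} and~\ref{prop2} hold for dual-levels, the nonempty dual-levels are exactly $L^*_0, L^*_1, \dots, L^*_{n-1}$, and $\calL^*$ is the chain $[L^*_0] \sorder_* [L^*_1] \sorder_* \cdots \sorder_* [L^*_{n-1}]$. Take an element $y_n \in L^*_{n-1}$. By Proposition~\ref{prop4dual} item~\ref{b3} applied to $[L^*_{n-2}] \sorder_* [L^*_{n-1}]$, there is $y_{n-1} \in L^*_{n-2}$ with $y_{n-1} < y_n$. Iterating, for each $k$ with $1 \leq k \leq n$ we obtain $y_k \in L^*_{k-1}$ such that $y_1 < y_2 < \cdots < y_n$. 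This is a chain of cardinality $n$ in $(L,\leq)$, so the maximum among the cardinalities of chains of $(L,\leq)$ is greater or equal than the number of elements of $\calL^*$.

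Unifying both inequalities, we conclude that the number of elements of $(\calL^*,\order_*)$ coincides with the maximum among the cardinalities of chains in $(L,\leq)$.\qed
\end{proof}
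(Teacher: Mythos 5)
Your proof is correct and is exactly the argument the paper intends: the paper states Proposition~\ref{propdual} without proof, leaving it as the dual transcription of the proof of Proposition~\ref{cardinality}, which is what you carried out (with the correct adjustment of building the chain downward from $L^*_{n-1}$ using Proposition~\ref{prop4dual} item~\ref{b3}).
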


\subsection{Relationships between linearisation and dual linearisation}\label{relationship}

\nd The aim of this section is to determine a condition in $(L,\leq)$, which characterises the equivalence $(\calL,\order)\equiv (\calL^*,\order_*)$. Notice that if that equivalence holds then $\order$ (or equivalently  $\order_*$) determines an $EM$-ordering in the set of levels and in the set of dual-levels of $(L,\leq)$. It is important to note, as we show in Example \ref{exampleABCLatticeDual}, that usually  $(\calL,\order)$ and $(\calL^*,\order_{*})$ are not equivalent. However, we can ensure that  $\calL$ and $\calL^*$ 
have the same number of elements, thanks to Propositions \ref{cardinality} and \ref{propdual}.

\begin{corollary}
Let $(L,\leq)$ be a finite poset. Then $|\calL|=|\calL^*|$.
\end{corollary}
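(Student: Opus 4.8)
The plan is to obtain this as an immediate consequence of the two cardinality computations that precede it. First I would apply Proposition~\ref{cardinality} to the poset $(L,\leq)$, which yields that $|\calL|$ equals the maximum, say $m$, of the cardinalities of the chains of $(L,\leq)$. Then I would apply Proposition~\ref{propdual}, which gives that $|\calL^{*}|$ equals that same maximum $m$; the point is that the quantity ``maximum cardinality of a chain'' is attached to $(L,\leq)$ itself and is blind to the direction (top-down versus bottom-up) in which the levels are formed. Chaining the two equalities then gives $|\calL| = m = |\calL^{*}|$, as required.

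The only step deserving a second look is the invocation of Proposition~\ref{propdual}, whose statement is given but whose proof is merely asserted to be ``dual'' to that of Proposition~\ref{cardinality}. Before relying on it I would confirm that the dual argument is genuinely available: the proof of Proposition~\ref{cardinality} rests on exactly two facts, namely that two elements of the same level are incomparable and that the ordering lifts chains upward through consecutive levels (Proposition~\ref{prop4}(\ref{3})). Both have explicitly recorded dual analogues, Lemma~\ref{lemma2}(\ref{lemma2:3}) and Proposition~\ref{prop4dual}(\ref{b3}) respectively, so the dual proof of Proposition~\ref{propdual} goes through verbatim, with $L_{i}$ replaced by $L^{*}_{i}$ and the chain built downward rather than upward.

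Consequently I do not expect any real obstacle here: the corollary is essentially bookkeeping, recording that $\calL$ and $\calL^{*}$ share the common value $m$ even though, as Example~\ref{exampleABCLatticeDual} shows, they need not be order-isomorphic. If one wished to avoid even the mild reliance on ``duality'' above, an alternative would be to exhibit an explicit bijection between the level set and the dual-level set by matching $[L_{i}]$ with $[L^{*}_{k-1-i}]$, where $k = |\calL|$; but the two-line argument through Propositions~\ref{cardinality} and~\ref{propdual} is cleaner and is the route I would take.
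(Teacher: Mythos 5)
Your proof is correct and follows exactly the paper's route: the corollary is deduced immediately by chaining Proposition~\ref{cardinality} and Proposition~\ref{propdual}, both of which identify the respective cardinality with the maximum chain length of $(L,\leq)$. Your additional check that the dual proof of Proposition~\ref{propdual} genuinely goes through is a reasonable piece of due diligence but does not change the argument.
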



\nd The condition which characterises the equivalence $(\calL,\order)\equiv(\calL^*,\order_{*})$ is given in the following definition:

\begin{definition}
A finite poset satisfies the \emph{equal length chain condition} (\emph{ELCC}) if and only if all maximal chains have the same length. 
\end{definition} 


\begin{example}
The poset given in example \ref{exabc} does not satisfy the ELCC since  $\bot < a < b <\top$ and $\bot < c < \top$ are two maximal chains with different length; $4$ and $3$ respectively. \qed
\end{example}

\begin{example}
The following two posets satisfy the ELCC:
$$
\begin{array}{ccc}
\includegraphics[scale=0.2]{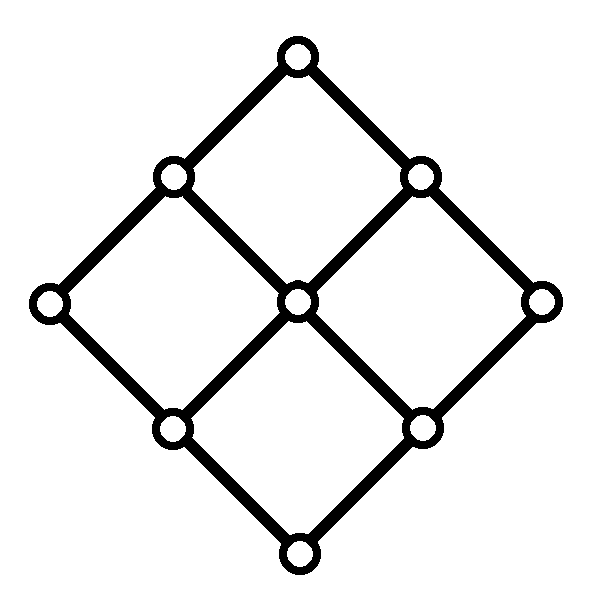} & \qquad \ \qquad & \includegraphics[scale=0.2]{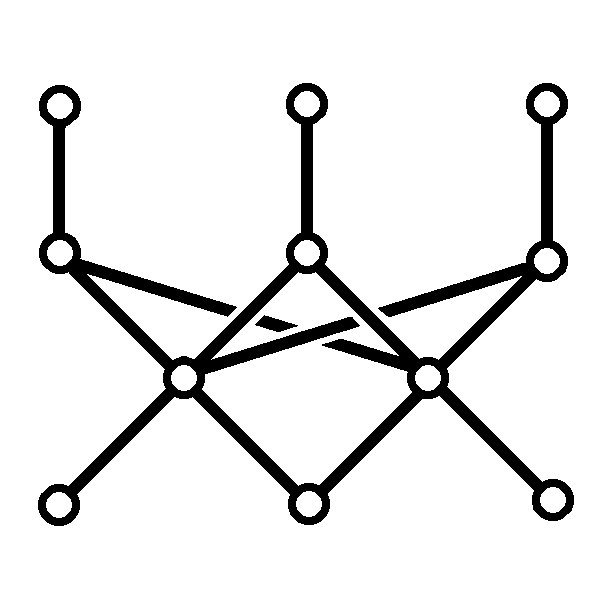}
\end{array}
$$
In the left poset every maximal chain has a length $5$, whereas in the right poset every maximal chain has a length of $4$. \qed
\end{example}


\begin{proposition}
 Let $(L,\leq)$ be a finite poset.  $(L,\leq)$ satisfies the ELCC if and only if $(\calL,\order) \equiv (\calL^*,\order_*)$; i.e for all $x\in L$:\begin{itemize}
\item the equivalence class of $x$ in $\calL$ and $\calL^*$ coincides;
\item $[x] \order [y]$ if and only if $[x] \order_{*}[y]$.
\end{itemize}
\end{proposition}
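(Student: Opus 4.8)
The plan is to express both properties through one numerical invariant of the poset. For $x\in L$ let $u(x)$ be the number of edges in a longest chain ascending from $x$ and $\ell(x)$ the number of edges in a longest chain descending from $x$. I would first record the elementary fact, essentially built into the recursive definitions of the levels, that $x\in L_i$ iff $u(x)=i$ and, dually, $x\in L^*_i$ iff $\ell(x)=i$; this is a straightforward induction on $i$. Combined with Propositions~\ref{cardinality} and~\ref{propdual}, this yields $n:=|\calL|=|\calL^*|=$ the maximum cardinality of a chain in $(L,\leq)$, and it lets us write $[x]=L_{u(x)}$ in $\calL$ and $[x]=L^*_{\ell(x)}$ in $\calL^*$.

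The heart of the argument is the claim that $(L,\leq)$ satisfies the ELCC if and only if $\ell(x)+u(x)=n-1$ for every $x\in L$. For the \emph{only if} direction: given $x$, concatenate a longest descending chain from $x$ with a longest ascending chain from $x$, obtaining a chain through $x$ of cardinality $\ell(x)+u(x)+1$ whose two endpoints are respectively minimal and maximal; extend it to an inclusion-maximal chain $C$. The ELCC forces $|C|=n$, so $\ell(x)+u(x)\le n-1$; conversely, splitting $C$ at $x$, the part above $x$ is an ascending chain from $x$ and the part below a descending chain from $x$, so they contribute at most $u(x)$ and $\ell(x)$ edges, giving $n-1=|C|-1\le \ell(x)+u(x)$. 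For the \emph{if} direction one would traverse an arbitrary inclusion-maximal chain $x_0<x_1<\dots<x_m$ (so consecutive elements form covering pairs, $x_0$ is minimal, $x_m$ maximal), observe that $\ell$ is strictly increasing and $u$ strictly decreasing along it with $\ell(x_0)=u(x_m)=0$, and then argue that each covering step changes these values by exactly one, which forces $m=n-1$.

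Granting the claim, the Proposition is bookkeeping. If the ELCC holds then $\ell(x)=n-1-u(x)$ for all $x$, hence $L_i=\{x:u(x)=i\}=\{x:\ell(x)=n-1-i\}=L^*_{n-1-i}$: the level partition and the dual-level partition of $L$ coincide, with only the numbering reversed, and substituting $\ell(x)=n-1-u(x)$ into the definitions of $\order$ and $\order_*$ gives $[x]\order[y]\iff[x]\order_*[y]$. For the converse, assume $(\calL,\order)\equiv(\calL^*,\order_*)$; coincidence of the two partitions supplies a bijection $\sigma$ with $L_i=L^*_{\sigma(i)}$, i.e.\ $\ell(x)=\sigma(u(x))$ for all $x$, and evaluating this identity along a chain of maximal cardinality $n$ — where $u$ takes the values $n-1,\dots,1,0$ and $\ell$ the values $0,1,\dots,n-1$ in order — forces $\sigma(i)=n-1-i$, i.e.\ $\ell(x)+u(x)=n-1$ for all $x$, so the claim returns the ELCC.

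I expect the main obstacle to be the \emph{if} half of the central claim: promoting the pointwise identity $\ell(x)+u(x)=n-1$ (equivalently, the coincidence of the two partitions) to the statement that every inclusion-maximal chain has the full length $n$. The delicate case is a covering pair $x<y$ that ``skips'' a level, that is with $u(x)\ge u(y)+2$: one must exclude a longest ascending chain out of $x$ whose second vertex is incomparable with $y$, and this is exactly the point at which I would check carefully whether $\ell+u\equiv n-1$ genuinely suffices, or whether this implication really needs the ``partitions coincide'' phrasing of the hypothesis rather than the equal-chain-length phrasing.
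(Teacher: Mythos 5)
Your reduction of level membership to height functions is sound: $x\in L_i$ iff $u(x)=i$ and $x\in L^*_j$ iff $\ell(x)=j$, and your \emph{only if} argument (concatenating a longest descending with a longest ascending chain through $x$ and comparing with an inclusion-maximal extension) correctly yields $\ell(x)+u(x)=n-1$ under the ELCC. This is essentially the paper's forward argument, which builds the same two chains through $x$ via item 3 of Propositions~\ref{prop4} and~\ref{prop4dual}. The bookkeeping translating $\ell+u\equiv n-1$ into the coincidence of the two partitions and of the two orders is also fine, as is your derivation of $\ell+u\equiv n-1$ from the coincidence of the partitions.

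The obstacle you flagged in your last paragraph is, however, not a technicality: the \emph{if} half of your central claim is false, so the proposal cannot be completed as written. Take $L=\{x,p,q,r,s,y\}$ with covering relations $x<p$, $p<q$, $x<y$, $r<s$, $s<y$ (so $q$ and $y$ are the maximal elements, $x$ and $r$ the minimal ones, and $y$ covers both $x$ and $s$). Here $n=3$ and one checks $L_0=L^*_2=\{y,q\}$, $L_1=L^*_1=\{p,s\}$, $L_2=L^*_0=\{x,r\}$, so $\ell+u\equiv 2=n-1$ everywhere and indeed $(\calL,\order)\equiv(\calL^*,\order_*)$; yet $\{x,y\}$ is an inclusion-maximal chain of cardinality $2$ while $\{x,p,q\}$ has cardinality $3$, so the ELCC fails. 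This is exactly your ``delicate case'': $y$ covers $x$ but $u(x)=2=u(y)+2$, because the longest ascending chain out of $x$ goes through $p$, which is incomparable with $y$. Note that the same example defeats the paper's own proof of the converse implication --- the step asserting that deleting the top element $z$ of a maximal chain $C$ leaves a maximal chain of $L\setminus L_0$ fails for $C=\{x,y\}$, since $\{x\}$ extends to $\{x,p\}$ in $L\setminus\{y,q\}$ --- and is in fact a counterexample to the converse direction of the Proposition as stated. So the only part of the equivalence that survives, and that your argument does establish, is that the ELCC implies $(\calL,\order)\equiv(\calL^*,\order_*)$.
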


\begin{proof}
Let $n\in\mathbb{N}$ be the length of each maximal chain in $(L,\leq)$. Let us prove that each level $L_i$ coincides with the dual level $L^*_{n-i}$ for $i=1,\cdots,n$. 

Consider $x\in L$ and  suppose that $x\in L_i$ and $x\in\calL^*_j$. By the Proposition~\ref{prop4} item \ref{3}, we can build the chain:
$$
x< y_1< y_2<\cdots<\top \ ,
$$ 
where each $y_k\in L_{i-k}$. 

Dually, by the Proposition~\ref{prop4dual} item \ref{3}, we can build the chain
$$
\bot<\cdots<z_2<z_1<x \ ,
$$
where each $z_k\in L^*_{j-k}$. Unifying both chains we obtain the maximal chain
$$
\bot<\cdots<z_2<z_1<x< y_1< y_2<\cdots<\top \ ,
$$
whose length is $j+i$. As $(L,\leq)$ satisfies the ELCC, $j+i=n$ and therefore $j=n-i$. Therefore $L^*_{n-i}=L_i$.

Finally, $[L_i]\order [L_j]$\ iff \ $i\geq j$ \ iff \ $n-j\geq n-i$ \ iff\ $[L^*_{n-i}]\order_{*} [L^*_{n-j}]$ \ iff\  $[L_i]\order_{*} [L_j]$.

Let us prove the other implication. That is, $(\calL,\order)\equiv(\calL^*,\order_{*})$ implies that $(L,\leq)$ satisfies the ELCC. 

We proceed by induction over the number of elements in $L$. For $i=1$ the proof is trivial since every poset with only one point satisfies the ELCC and its linearisation and dual-linearisation coincide. Suppose that the number of elements of $L$ is $n > 1$ and that the result is true for every poset with a number of elements less than $n$. Consider now $L\setminus L_0=L\setminus\{x\in L \colon \nexists y\in L \textit{ such that } x>y \}$. As $(\calL,\order)\equiv(\calL^*,\order_{*})$ then necessarily $((L\setminus L_0),\leq)\equiv((L\setminus L_0)^*,\order_{*})$. Therefore, by induction hypotesis, $L\setminus L_0$ satisfies the ELCC. Let $k$ be the length of every maximal chain in $L\setminus L_0$. Let us show that the length of every maximal chain in $L$ is $k+1$. Let  $C$ be a maximal chain of $(L,\leq)$ with length $k'$. By Proposition~\ref{prop4} item \ref{3}, the maximal element in $z\in C$ has to belong to $L_0$. Moreover $C\setminus \{z\}$ is a maximal chain of $L\setminus L_0$. Then by using the  induction hypotesis, the length of $C\setminus \{z\}$ is $k$ and therefore the length of $C$ is $k+1$. \qed
\end{proof}

\section{Extending monotonic mappings  over the linearisation}\label{monotonyMappings}

We take up again the issue to extend monotonic mappings over a linearized poset $(\calL,\order)$ or dually over $(\calL^*,\order_{*})$. The main problem of extending monotonic mappings in a linearisation procedure is that it \emph{cannot} been done by homomorphic extension if the properties described in the introduction are satisfied. 
The following lemma shows this (unexpected) negative feature for a special family of posets: the lattices.~\footnote{A lattice~\cite{GeneralLatticeTheory} is a poset $(L,\leq)$ such that for every pair of elements $a,b\in L$ the operators supremum ($\sup$) and infimum ($\inf$) are defined.}
 
\begin{proposition}\label{teo1}
Let $(L_{1},\leq_{1})$ be a lattice such that $\leq_{1}$ is not a linear ordering. Then there does not exist a linear lattice $(L_{2},\leq_{2})$ together with a map 
$p\colon L_{1}\to L_{2}$ such that  
$\forall x,y\in L_{1}$:
\begin{enumerate}
\item $x <_{1} y$ implies $p(x) <_{2} p(y)$; 
\item for every monotonic mapping  $f\colon L_{1}\to L_{1}$, the mapping   $\lf\colon L_{2}\to L_{2}$ defined by  $\lf(p(x))=p(f(x))$ is well defined and monotonic.
\end{enumerate}
\end{proposition}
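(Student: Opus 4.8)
The plan is to argue by contradiction. Suppose such a linear lattice $(L_2,\leq_2)$ and map $p$ exist; I will produce one monotonic self-map of $L_1$ whose induced map cannot meet the requirements.

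Since $\leq_1$ is not linear, fix two incomparable elements $a,b\in L_1$. As $(L_1,\leq_1)$ is a lattice, the meet $c:=a\wedge a$, sorry, $c := a\wedge b = \inf\{a,b\}$ exists, and incomparability forces it to be \emph{strictly} below both: if, say, $a\wedge b=a$ then $a\leq_1 b$, contradicting incomparability, so $c<_1 a$ and $c<_1 b$. By condition~1 this yields $p(c)<_2 p(a)$ and $p(c)<_2 p(b)$. Because $\leq_2$ is linear, $p(a)$ and $p(b)$ are comparable; after possibly interchanging the names of $a$ and $b$, I may assume $p(a)\leq_2 p(b)$.

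The key step is the choice of the monotonic map: take $f\colon L_1\to L_1$ defined by $f(x)=x\wedge a$. It is well defined because the lattice has all binary meets, and it is monotonic since $x\leq_1 y$ implies $x\wedge a\leq_1 y\wedge a$; hence by condition~2 the induced map $\lf$ with $\lf(p(x))=p(f(x))$ is well defined and monotonic. Now $f(a)=a\wedge a=a$ and $f(b)=b\wedge a=c$, so from $p(a)\leq_2 p(b)$ — using monotonicity of $\lf$ when the inequality is strict, and well-definedness of $\lf$ when $p(a)=p(b)$ — I obtain $p(a)=p(f(a))\leq_2 p(f(b))=p(c)$. This contradicts $p(c)<_2 p(a)$ established above, and the proof is complete.

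I expect the only real obstacle to be spotting this map $f$: one needs a genuinely order-preserving self-map of $L_1$ that nevertheless ``reverses'' on the pair $\{a,b\}$ whatever linear order $p$ is forced to impose on them, and $x\mapsto x\wedge a$ does exactly that, pinning $a$ as a fixed point while pushing the incomparable element $b$ strictly below it. Everything else — the strictness of $c<_1 a$, and the harmless sub-case $p(a)=p(b)$ — is routine bookkeeping. Note also that the argument uses only that $L_1$ has binary meets and that $\lf$ is well defined and monotonic; linearity of $L_2$ is invoked solely to make $p(a)$ and $p(b)$ comparable, so in fact the statement holds with ``linear lattice'' weakened to ``linearly ordered set''.
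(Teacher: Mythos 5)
Your proof is correct. It follows the same overall strategy as the paper --- fix two incomparable elements $a,b$, assume w.l.o.g.\ $p(a)\leq_{2}p(b)$, and derive a contradiction from a concrete monotonic self-map built out of the lattice operations --- but your execution is genuinely tidier in one respect. The paper splits into the cases $p(a)=p(b)$ and $p(a)<_{2}p(b)$ and uses a \emph{different} map for each: $f(x)=\sup\{x,a\}$ to break well-definedness in the first case, and ``any monotonic mapping with $f(a)=b$ and $f(b)=a$'' to break monotonicity in the second. That second map is only asserted to exist; its construction for an arbitrary lattice is not given (it does exist, but verifying this is a small extra burden the paper leaves to the reader). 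Your single map $f(x)=x\wedge a$ is explicit, is trivially monotonic, and handles both cases at once, since $p(a)\leq_{2}p(b)$ together with well-definedness and monotonicity of $\lf$ forces $p(a)\leq_{2}p(a\wedge b)$, contradicting $p(a\wedge b)<_{2}p(a)$, which follows from condition~1 and the strictness $a\wedge b<_{1}a$ guaranteed by incomparability. Your closing observations are also sound: the argument only uses binary meets in $L_{1}$ (so ``lattice'' could be weakened to ``meet-semilattice''), and linearity of $L_{2}$ enters only to make $p(a)$ and $p(b)$ comparable --- though note that ``linear lattice'' versus ``linearly ordered set'' is not a real weakening, since every linear order is already a lattice.
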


\begin{proof} As $(L_{1},\leq_{1})$ is not a linear lattice, there exists two incomparable elements $a,b\in L_{1}$. Without loss of generality, let us suppose that $p(a) \leq_{2} p(b)$. If $p(a) = p(b)$ then the extension in $(L_{2},\leq_{2})$ of the monotonic mapping $f(x)=\sup\{x,a\}$ is not well defined: in fact 
\begin{eqnarray*}
p(a) & = & p(\sup\{a,a\})  =  p(f(a)) \\
& = & \lf(p(a))  =  \lf(p(b)) \\
& = & p(f(b))  =  p(f(\sup\{b,a\})) \\
& >_{2} & p(a) \ .
\end{eqnarray*}

\nd If $p(a) <_{2} p(b)$ then the extension in $(L_{2},\leq_{2})$  of any monotonic mapping such that $f(a)=b$ and $f(b)=a$ cannot be monotonic. In fact,  $p(a) <_{2} p(b)$, but 
\[
\lf(p(a)) = p(f(a)) = p(b) >_{2} p(a) = p(f(b))= \lf(p(b)) \ . 
\]
\qed
\end{proof}


\nd  However,  on the positive side, if $(L,\leq_L)$ and $(M,\leq_M)$) are \emph{finite} posets and $f\colon L^n \to M$ is a mapping, we can extend $f$ over our linearisation $(\calL,\order_{\calL})$ and $(\calM,\order_{\calM})$, respectively, by the following two ways: $\lf \colon \calL^{n} \to \calM$ and $\tf \colon \calL^{n} \to \calM$, where
$$
\lf([x_1],\cdots,[x_n])=\max\{[f(y_1,\cdots,y_n)] \mid y_1\in[x_1],\cdots, y_n\in[x_n]\}
$$
and
$$
\tf([x_1],\cdots,[x_n])=\min\{[f(y_1,\cdots,y_n)] \mid y_1\in[x_1],\cdots, y_n\in[x_n]\} \ .
$$
Obviously $\lf$ and $\tf$ are well-defined. Note the following abuse of notation in the above formulae: we use $[\cdot]$ to  denote  equivalence classes in $(\calL,\order_{\calL})$ as well as in $(\calM,\order_{\calM})$. We keep this abuse of notation in the rest of the section since is clear when $[\cdot]$ is either an element of $(\calL,\order_{\calL})$ or an element of $(\calM, \order_{\calM})$. 


More general mappings $f\colon L^n\to M^m$ can be extended to a mapping from $\calL^n$ to $\calM^m$ as well using  the both ideas above. 
In this case $f$ can be seen as $m$ mappings $f_i\colon L^n\to M$ and $\lf$ and $\tf$ are defined by:
\begin{eqnarray*}
\lf([x_1],\cdots,[x_n]) & = & (\lf_1([x_1],\cdots,[x_n]),\cdots,\lf_m([x_1],\cdots,[x_n])) \\\\
\tf([x_1],\cdots,[x_n]) & = & (\tf_1([x_1],\cdots,[x_n]),\cdots,\tf_m([x_1],\cdots,[x_n])) \ .
\end{eqnarray*}

\nd  The monotonicity (resp. antitonicity) of $\lf$ and $\tf$ is determined by the monotony (resp. antitonicity) of each $\lf_i$ and $\tf_i$, respectively. 
As a consequnece,  we may restrict our study on mappings to the former case in which the codomain is a poset $(M,\leq_M)$. The following results holds:

\begin{proposition}
Let $(L,\leq_L)$ and $(M,\leq_M)$ be two finite posets. Then, for every mapping  $f\colon L^n\to M$, the following inequalities hold for all $x_1,\cdots,x_n\in L$:
\[
\tf([x_1],\cdots,[x_n]) \order_{\calM} [f(x_1,\cdots,x_n)] \order_{\calM} \lf([x_1],\cdots,[x_n]) \ .
\]
\end{proposition}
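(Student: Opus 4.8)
The plan is to prove the two inequalities separately, exploiting the fact that $\tf$ and $\tf$ are defined as a min and a max over the same finite set of equivalence classes that contains the particular class $[f(x_1,\cdots,x_n)]$.

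First I would fix $x_1,\cdots,x_n\in L$ and consider the set
\[
S=\{[f(y_1,\cdots,y_n)] \mid y_1\in[x_1],\cdots, y_n\in[x_n]\}\subseteq\calM \ .
\]
Since $L$ and $M$ are finite, each $[x_i]$ is a nonempty finite subset of $L$, so $S$ is a nonempty finite subset of the chain $(\calM,\order_{\calM})$; hence $\min S$ and $\max S$ exist and are well defined, and by definition $\tf([x_1],\cdots,[x_n])=\min S$ and $\lf([x_1],\cdots,[x_n])=\max S$. The key observation is that, since each $x_i\in[x_i]$, the tuple $(y_1,\cdots,y_n)=(x_1,\cdots,x_n)$ is one of the admissible choices, so $[f(x_1,\cdots,x_n)]\in S$.

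Now the conclusion is immediate from the elementary fact that in a linear order every element of a finite set lies between the minimum and the maximum of that set. Concretely, since $[f(x_1,\cdots,x_n)]\in S$, we have $\min S \order_{\calM} [f(x_1,\cdots,x_n)]$ and $[f(x_1,\cdots,x_n)] \order_{\calM} \max S$, which rewritten with the definitions gives exactly
\[
\tf([x_1],\cdots,[x_n]) \order_{\calM} [f(x_1,\cdots,x_n)] \order_{\calM} \lf([x_1],\cdots,[x_n]) \ .
\]

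Honestly, there is no real obstacle here: the only thing that needs care is making sure the minimum and maximum are taken over a nonempty set (which uses finiteness of $L$ and the nonemptiness of the levels, i.e.\ Corollary~\ref{cor1}) and that $\order_{\calM}$ is linear so that $\min$ and $\max$ behave as expected (which was established when $\order_{\calM}$ was defined). The argument is a one-line appeal to ``a point of a finite set is sandwiched between its min and max in a chain,'' so I would keep the write-up short and not belabour it.
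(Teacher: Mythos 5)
Your proof is correct: the tuple $(x_1,\dots,x_n)$ is an admissible choice in the defining set $S$, so $[f(x_1,\dots,x_n)]\in S$ and is therefore sandwiched between $\min S=\tf([x_1],\dots,[x_n])$ and $\max S=\lf([x_1],\dots,[x_n])$ in the chain $(\calM,\order_{\calM})$. The paper omits the proof of this proposition entirely (treating it as immediate), and your argument is exactly the intended one, so there is nothing to compare beyond noting that your care about nonemptiness and linearity of $\order_{\calM}$ is appropriate but routine.
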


\nd The following propositions show us that in the linearisation $\lf$ keeps the monotonicity of $f$ whereas $\tf$ keeps the antitonicity  of $f$:

\begin{proposition}\label{monotonyOver}
If $f:L^n\to M$ is monotonic then $\lf:\calL^n\to \calM$ is monotonic as well.
\end{proposition}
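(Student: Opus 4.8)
The plan is to show that the pointwise definition of $\lf$ as a maximum over representatives is compatible with the orderings $\order_{\calL}$ on the domain and $\order_{\calM}$ on the codomain. Concretely, suppose $([x_1],\dots,[x_n]) \order_{\calL} ([y_1],\dots,[y_n])$ componentwise, i.e.\ $[x_i] \order_{\calL} [y_i]$ for every $i$. I want to conclude $\lf([x_1],\dots,[x_n]) \order_{\calM} \lf([y_1],\dots,[y_n])$.

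First I would fix representatives $u_1 \in [x_1],\dots,u_n \in [x_n]$ that attain the maximum defining $\lf([x_1],\dots,[x_n])$, so that $\lf([x_1],\dots,[x_n]) = [f(u_1,\dots,u_n)]$. Next, the key step: for each $i$, since $[x_i] \order_{\calL} [y_i]$, Proposition~\ref{prop4} item~\ref{3} (the $H$-ordering property of the levels) supplies an element $v_i \in [y_i]$ with $u_i \leq_L v_i$. Then $(u_1,\dots,u_n) \leq_L (v_1,\dots,v_n)$ pointwise in $L^n$, so by monotonicity of $f$ we get $f(u_1,\dots,u_n) \leq_M f(v_1,\dots,v_n)$, hence $[f(u_1,\dots,u_n)] \order_{\calM} [f(v_1,\dots,v_n)]$ by Proposition~\ref{prop4} item~\ref{2} applied in $(\calM,\order_{\calM})$. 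Finally, since $(v_1,\dots,v_n)$ is just one admissible choice of representatives for $([y_1],\dots,[y_n])$, we have $[f(v_1,\dots,v_n)] \order_{\calM} \lf([y_1],\dots,[y_n])$ by definition of $\lf$ as a maximum. Chaining the three inequalities through $\order_{\calM}$ gives $\lf([x_1],\dots,[x_n]) = [f(u_1,\dots,u_n)] \order_{\calM} [f(v_1,\dots,v_n)] \order_{\calM} \lf([y_1],\dots,[y_n])$, which is exactly what is needed.

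The main obstacle — really the only subtle point — is invoking item~\ref{3} of Proposition~\ref{prop4} correctly and noting that it applies componentwise: one needs a single $v_i$ in the \emph{same} class $[y_i]$ that dominates the chosen $u_i$, and this is precisely the $H$-ordering statement, which holds because $\order_{\calL}$ is a Hoare order on the levels. I would also remark that the $\max$ in the definition of $\lf$ is well-defined because $L$, $M$, and hence all the level sets are finite and $(\calM,\order_{\calM})$ is a (finite) chain, so the maximum of any nonempty finite subset exists; this was already observed in the text ("$\lf$ and $\tf$ are well-defined"), so I would not belabour it. The dual statement for $\tf$ and antitonicity would follow by the symmetric argument using Proposition~\ref{prop4dual} item~\ref{b3} (the $S$-ordering of the dual-levels), but that is the content of the next proposition, not this one.
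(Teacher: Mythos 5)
Your proof is correct and follows essentially the same route as the paper's: both use Proposition~\ref{prop4} item~\ref{3} componentwise to lift representatives of the $[x_i]$ to dominating representatives of the $[y_i]$, then apply monotonicity of $f$ and Proposition~\ref{prop4} item~\ref{2} in $(\calM,\order_{\calM})$, and finally bound by the maximum defining $\lf([y_1],\dots,[y_n])$. The only cosmetic difference is that you instantiate the argument at the maximising tuple for $\lf([x_1],\dots,[x_n])$, whereas the paper runs it for an arbitrary tuple and then compares the two maxima; the two are equivalent.
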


\begin{proof} 
Assume $[x_i] \order_{L} [y_i]$ for $i=1,\cdots,n$. From Proposition~\ref{prop4} item \ref{3}, for any tuple $(v_1,\cdots,v_n)\in[x_1]\times\cdots\times[x_n]$ there exists a tuple $(t_1,\cdots,t_n)\in[y_1]\times\cdots\times[y_n]$ such that  $v_i\leq t_i$ for all $i=1,\cdots,n$. Then, by using the monotonicity of $f$ and Proposition \ref{prop4} item $\ref{2}$ we have the inequality   $[f(v_1,\cdots,v_n)] \preceq_{\calM} [f(t_1,\cdots,t_n)]$. Therefore:
\begin{eqnarray*}
\lf([x_1],\cdots,[x_n]) & = & \max\{[f(z_1,\cdots,z_n)] \mid z_1\in[x_1],\cdots, z_n\in[x_n]\} \\
& \order_{M} & \max\{[f(t_1,\cdots,t_n)] \mid t_1\in[y_1],\cdots, t_n\in[y_n]\} \\
& \order_{M} & \lf([y_1],\cdots,[y_n]) \ .
\end{eqnarray*}
\qed
\end{proof}

\nd In a similar way we may prove that:
\begin{proposition}\label{antitonicityUnder}
If $f:L^n\to M$ is antitonic then $\tf:\calL^n\to \calM $ is antitonic as well.
\end{proposition}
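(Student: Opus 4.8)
The plan is to mirror the proof of Proposition~\ref{monotonyOver} for $\lf$, but to track carefully how the antitonicity of $f$ reverses the relevant inequality, so that we land on the order-reversing conclusion demanded by antitonicity of $\tf$ and not on order preservation. Concretely, assume $[x_i]\order_{\calL}[y_i]$ for $i=1,\dots,n$. Since $\tf$ maps into the total order $\calM$, antitonicity of $\tf$ amounts to proving the single inequality $\tf([y_1],\dots,[y_n])\order_{\calM}\tf([x_1],\dots,[x_n])$, i.e.\ that enlarging each argument from $[x_i]$ to $[y_i]$ can only decrease the $\min$-value.

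First I would fix an arbitrary witness tuple $(v_1,\dots,v_n)\in[x_1]\times\cdots\times[x_n]$. Because $[x_i]\order_{\calL}[y_i]$, Proposition~\ref{prop4} item~\ref{3} (the $H$-ordering property of the linearisation) supplies, coordinatewise, a tuple $(t_1,\dots,t_n)\in[y_1]\times\cdots\times[y_n]$ with $v_i\leq t_i$ for all $i$. This matching step is identical to the one used for $\lf$; the divergence appears only at the next stage, when $f$ is applied.

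Next I would invoke antitonicity of $f$ on $(v_1,\dots,v_n)\leq_L(t_1,\dots,t_n)$ to get $f(t_1,\dots,t_n)\leq_M f(v_1,\dots,v_n)$, and then Proposition~\ref{prop4} item~\ref{2} (read now in $\calM$) to obtain $[f(t_1,\dots,t_n)]\order_{\calM}[f(v_1,\dots,v_n)]$. Since $(t_1,\dots,t_n)$ lies in $[y_1]\times\cdots\times[y_n]$, the class $[f(t_1,\dots,t_n)]$ belongs to the set over which $\tf([y_1],\dots,[y_n])$ takes its minimum, so $\tf([y_1],\dots,[y_n])\order_{\calM}[f(t_1,\dots,t_n)]\order_{\calM}[f(v_1,\dots,v_n)]$. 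As $(v_1,\dots,v_n)$ was arbitrary in $[x_1]\times\cdots\times[x_n]$, the element $\tf([y_1],\dots,[y_n])$ is a lower bound in $\calM$ for every $[f(v_1,\dots,v_n)]$ ranging over that product; passing to the minimum yields $\tf([y_1],\dots,[y_n])\order_{\calM}\tf([x_1],\dots,[x_n])$, which is exactly the required antitonicity.

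The main obstacle is conceptual rather than computational: one must resist the reflex of bounding $\tf([x])$ above by $\tf([y])$ in direct analogy with the $\lf$ argument, since that would establish order preservation — a bound of the wrong sign. The correct move is to reuse the same $H$-ordering matching, but to exploit that antitonicity flips the comparison of the $f$-values, and then to pair this flip with the $\min$ (not $\max$) definition of $\tf$. Here totality of $\calM$ is what makes the minima attained and mutually comparable, and showing that $\tf([y])$ sits below \emph{every} $[f(v)]$ with $v\in[x]$ is precisely what forces $\tf([y])\order_{\calM}\tf([x])$.
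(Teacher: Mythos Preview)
Your proposal is correct and follows exactly the route the paper intends: the paper does not spell out this proof but simply states that it is obtained ``in a similar way'' to Proposition~\ref{monotonyOver}, and your argument is precisely that similar way, using Proposition~\ref{prop4} item~\ref{3} for the coordinatewise matching, antitonicity of $f$ to reverse the inequality, and the $\min$-definition of $\tf$ to conclude.
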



\nd It is important to note that $\lf$ and $\tf$ do not necessarily extend antitonic and monotonic mappings, respectively. The following example shows that feature:

\begin{example}
Over the poset  given in Example \ref{exabc} we consider the following monotonic mapping  $f$ and antitonic mapping  $g$:
$$
\begin{array}{ccc}
f(x)= \left \{ \begin{array}{ccc}
\bot & \text{if} & x=\bot \\
c& \text{if} & x=c\\
\top & \text{if} & x=\top,a,b
\end{array}
\right.
& \quad \ \quad &
g(x)= \left \{ \begin{array}{ccc}
\top & \text{if} & x=\bot \\
c& \text{if} & x=c\\
\bot & \text{if} & x=\top,a,b
\end{array}
\right .
\end{array}
$$
It is easy to check that, despite being $f$ monotonic, the mapping 
$$
\tf([x])= \left \{ \begin{array}{ccc}
[\bot] & \text{if} & [x]=[\bot] \\
\text{[c]}& \text{if} & [x]=[c]\\
\top &\text{if} & [x]=[a]
\end{array}
\right.
$$
is not monotonic ($[a] \order [c]$ and $\tf([a]) = [\top] \siorder \tf([c]) = [c]$). On the other hand, it is also easy to check that, despite being $g$ antitonic, the mapping :
$$
\lfx{g}([x])= \left \{ \begin{array}{ccc}
[\top] & \text{if} & [x]=[\bot] \\
\text{[c]}& \text{if} & [x]=[c]\\
\bot & \text{if} & [x]=[a]
\end{array}
\right.
$$
is not antitonic ($[a]\leq[c]$ and $\overline{g}([a])=[\bot]<\overline{g}([c])=[c]$). \qed
\end{example}

\nd Obviously, we can extend mappings through $\lf$ and $\tf$ in the dual-linearisation as well, which are defined for $(\calL^*,\order_{*})$ similarly as  for $(\calL,\order)$. However, the behaviour of these extensions is dual. More precisely, the extension $\lf$ keeps the antitonicity, while $\tf$ keeps the  monotonicity in the dual-linearisation.

\begin{proposition}\label{propmapping }\label{monotonyOverDual}
If $f:L^n\to M$ is antitonic then $\lf:(\calL^*)^n\to \calM$ is antitonic as well.
\end{proposition}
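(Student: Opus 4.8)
The plan is to mirror the proof of Proposition~\ref{monotonyOver}, but working in the dual-linearisation and using the \emph{S}-ordering property (Proposition~\ref{prop4dual} item~\ref{b3}) in place of the \emph{H}-ordering property (Proposition~\ref{prop4} item~\ref{3}). First I would fix tuples $([x_1],\dots,[x_n])$ and $([y_1],\dots,[y_n])$ in $(\calL^*)^n$ with $[x_i]\order_* [y_i]$ for all $i$, and I want to show $\lf([y_1],\dots,[y_n]) \order_* \lf([x_1],\dots,[x_n])$, which is exactly antitonicity of $\lf$ on the dual-linearisation. By definition, $\lf([y_1],\dots,[y_n]) = \max\{[f(t_1,\dots,t_n)] \mid t_i\in[y_i]\}$, so it suffices to show that for every representative tuple $(t_1,\dots,t_n)\in[y_1]\times\cdots\times[y_n]$ there is a tuple $(v_1,\dots,v_n)\in[x_1]\times\cdots\times[x_n]$ with $[f(t_1,\dots,t_n)] \order_* [f(v_1,\dots,v_n)]$.

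The key step is the choice of the $v_i$. Here I would invoke Proposition~\ref{prop4dual} item~\ref{b3}: since $[x_i] \order_* [y_i]$, for each $t_i\in[y_i]$ there exists $v_i\in[x_i]$ with $v_i \leq t_i$. Applying antitonicity of $f$ to the pointwise inequality $(v_1,\dots,v_n) \leq (t_1,\dots,t_n)$ gives $f(t_1,\dots,t_n) \leq f(v_1,\dots,v_n)$ in $M$, and then Proposition~\ref{prop4dual} item~\ref{b2} lifts this to $[f(t_1,\dots,t_n)] \order_* [f(v_1,\dots,v_n)]$ in $\calM^{*}$. This is precisely the step that switches roles relative to the linearisation case: the \emph{S}-ordering hands us, for each given element of the \emph{larger} class, a dominated element of the \emph{smaller} class, which is exactly what an \emph{antitonic} $f$ needs in order to reverse the inequality.

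Having done this pointwise, I would conclude with the chain of inequalities over the maxima:
\[
\lf([y_1],\dots,[y_n]) = \max\{[f(t_1,\dots,t_n)] \mid t_i\in[y_i]\} \order_* \max\{[f(v_1,\dots,v_n)] \mid v_i\in[x_i]\} \order_* \lf([x_1],\dots,[x_n]) \ ,
\]
where the first $\order_*$ holds because every term on the left is bounded above (in $\order_*$) by some term on the right, and the second is immediate since the middle maximum ranges over a subset of the tuples defining $\lf([x_1],\dots,[x_n])$. Finiteness of $L$ and $M$ is what makes all the maxima exist, so it is used implicitly throughout. I do not expect a genuine obstacle here; the only thing to be careful about is bookkeeping the direction of $\order_*$ and making sure the correct dual item (item~\ref{b3}, the \emph{S}-ordering, not item~\ref{b1} or~\ref{b2}) is the one doing the work, since that is the asymmetry between this proposition and Proposition~\ref{monotonyOver}.
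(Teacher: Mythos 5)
Your proof is correct and is exactly what the paper intends: the paper omits the argument, remarking only that it is ``essentially the same'' as the proof of Proposition~\ref{monotonyOver}, and your adaptation --- replacing the $H$-ordering step by the $S$-ordering property of Proposition~\ref{prop4dual} item~\ref{b3} and letting antitonicity of $f$ reverse the resulting pointwise inequality before taking maxima --- is precisely that adaptation. The only cosmetic remark is that the lift of $f(t_1,\dots,t_n)\leq_M f(v_1,\dots,v_n)$ to classes of the codomain should use whichever linearisation of $M$ the extension is stated over (the paper writes $\calM$, not $\calM^*$), but item~\ref{2} of Proposition~\ref{prop4} and item~\ref{b2} of Proposition~\ref{prop4dual} both give the needed inequality, so nothing changes.
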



\begin{proposition}\label{propmappingB }\label{antitonicityUnderDual}
If $f:L^n\to L$ is monotonic then $\tf:(\calL^*)^n\to \calM$ is monotonic as well.
\end{proposition}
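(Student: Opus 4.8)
The plan is to mirror the proof of Proposition~\ref{antitonicityUnderDual}'s companion, Proposition~\ref{monotonyOver}, but carried out over the dual-linearisation and using the $S$-ordering property (Proposition~\ref{prop4dual} item~\ref{b3}) in place of the $H$-ordering property. First I would assume $[x_i] \order_* [y_i]$ for $i = 1,\dots,n$, where now the equivalence classes are taken with respect to the dual-levels $L_i^*$. The goal is to show $\tf([x_1],\dots,[x_n]) \order_{\calM} \tf([y_1],\dots,[y_n])$, recalling that $\tf$ is defined as the $\min$ over representatives.

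The key step is the direction of the representative-matching. Since $\tf([y_1],\dots,[y_n])$ is a minimum, it equals $[f(t_1,\dots,t_n)]$ for some tuple $(t_1,\dots,t_n) \in [y_1]\times\cdots\times[y_n]$. Now I invoke Proposition~\ref{prop4dual} item~\ref{b3}: because $[x_i] \order_* [y_i]$, for that particular $t_i \in [y_i]$ there exists $v_i \in [x_i]$ with $v_i \leq t_i$. Applying monotonicity of $f$ gives $f(v_1,\dots,v_n) \leq_M f(t_1,\dots,t_n)$, hence by Proposition~\ref{prop4dual} item~\ref{b2} (the dual of \ref{prop4} item~\ref{2}) we get $[f(v_1,\dots,v_n)] \order_{*} [f(t_1,\dots,t_n)]$ in $\calM$. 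Since $\tf([x_1],\dots,[x_n]) \order_{*} [f(v_1,\dots,v_n)]$ by definition of $\min$, transitivity chains these together:
\[
\tf([x_1],\dots,[x_n]) \order_{*} [f(v_1,\dots,v_n)] \order_{*} [f(t_1,\dots,t_n)] = \tf([y_1],\dots,[y_n]) \ .
\]

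The main obstacle — really the only subtle point — is getting the quantifier direction right: for $\lf$ with the $H$-ordered levels one matches an arbitrary representative of the smaller class to some representative of the larger, whereas here, because the dual-levels are $S$-ordered, one must start from the representative of $[y_i]$ realising the minimum of $\tf$ and pull it back to a representative of $[x_i]$. This is exactly why $\tf$ (not $\lf$) is the extension that preserves monotonicity in the dual setting. Once the matching is set up correctly, the rest is the same routine chain of inequalities as in Proposition~\ref{monotonyOver}, and the statement for codomain $M^m$ follows componentwise as already noted in the text. The proof is otherwise identical in form to the one given for Proposition~\ref{monotonyOver}, so I would state it as "analogous to the proof of Proposition~\ref{monotonyOver}, using Proposition~\ref{prop4dual} item~\ref{b3} in place of Proposition~\ref{prop4} item~\ref{3}."
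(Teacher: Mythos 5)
Your proof is correct and follows exactly the route the paper intends: the paper gives no explicit argument, stating only that the proof is ``essentially the same as for Proposition~\ref{monotonyOver}'', and your write-up is the correct dualization of that argument, with the key adjustment properly identified --- starting from the tuple realising the minimum in $[y_1]\times\cdots\times[y_n]$ and pulling it back via the $S$-ordering of Proposition~\ref{prop4dual} item~\ref{b3}.
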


%

\nd The proofs of both results are essentially the same as for Proposition~\ref{monotonyOver}. 

By using Proposition~\ref{monotonyOverDual}, Proposition~\ref{antitonicityUnderDual} and the results provided in Section \ref{relationship}, 
we prove that:

%

\begin{corollary}
Let $(L,\leq_L)$ and $(M,\leq_M)$ be two finite posets such that $(L,\leq_L)$ satisfies the ELCC. Then if $f\colon L^n\to M$ is monotonic (resp. antitonic) then $\lf$ and $\tf$ are monotonic (resp. antitonic). \footnote{Independently if $\lf$ and $\tf$ is defined over either the linearisation or the dual-linearisation.} 
\end{corollary}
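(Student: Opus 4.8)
The plan is to obtain the corollary purely by combining the four extension results already at hand --- Propositions~\ref{monotonyOver}, \ref{antitonicityUnder}, \ref{monotonyOverDual} and \ref{antitonicityUnderDual} --- with the characterisation of the ELCC proved in Section~\ref{relationship}. The crucial remark is this: since $(L,\leq_L)$ satisfies the ELCC, that characterisation yields $(\calL,\order)\equiv(\calL^*,\order_{*})$, \ie~for every $x\in L$ the equivalence class $[x]$ is the same whether it is computed from the levels $L_i$ or from the dual-levels $L_i^*$, and the relations $\order$ and $\order_{*}$ coincide. Hence the linearisation and the dual-linearisation of $L$ are one and the same finite chain.

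First I would record the consequence of this for the extensions. Fix $f\colon L^n\to M$. The maps $\lf$ and $\tf$ are defined by taking, respectively, the $\max$ and the $\min$ in $(\calM,\order_{\calM})$ over the sets $\{[f(y_1,\dots,y_n)]\mid y_i\in[x_i]\}$; the only place where the linearisation of $L$ enters is the choice of the blocks $[x_1],\dots,[x_n]$ of the domain, while the codomain is in every case the \emph{ordinary} linearisation $\calM$ of $M$ (no hypothesis is placed on $M$). Since the blocks of $\calL$ and of $\calL^*$ agree, the extension $\lf\colon\calL^n\to\calM$ is literally the same function as $\lf\colon(\calL^*)^n\to\calM$, and likewise for $\tf$; moreover $\order$ and $\order_{*}$ are the same order on this common domain. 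Thus ``monotonic over the linearisation'' and ``monotonic over the dual-linearisation'' mean exactly the same thing here, which already takes care of the footnote.

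With this identification in place the two cases are immediate. If $f$ is monotonic, then $\lf$ is monotonic by Proposition~\ref{monotonyOver}, and $\tf$ is monotonic by Proposition~\ref{antitonicityUnderDual} applied to the dual-linearisation, which by the previous paragraph is the very same statement about $\tf\colon\calL^n\to\calM$. If $f$ is antitonic, then $\tf$ is antitonic by Proposition~\ref{antitonicityUnder}, and $\lf$ is antitonic by Proposition~\ref{monotonyOverDual} via the dual-linearisation, transported back to $\calL$ in the same way. This establishes all four assertions.

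I expect the only genuinely delicate point to be the bookkeeping of the second paragraph: one must be careful that the ELCC hypothesis is invoked only on the \emph{domain} $L$, whereas the codomain is always the standard linearisation $\calM$, so that the identification $\calL\equiv\calL^*$ is precisely what forces the two versions of $\lf$ (and of $\tf$) to coincide as functions. Once that identification is stated cleanly, nothing further is required: the corollary is just the conjunction of Propositions~\ref{monotonyOver} and \ref{antitonicityUnderDual} in the monotone case, and of Propositions~\ref{antitonicityUnder} and \ref{monotonyOverDual} in the antitone case.
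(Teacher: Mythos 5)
Your argument is correct and is exactly the route the paper intends: it derives the corollary from Propositions~\ref{monotonyOver}, \ref{antitonicityUnder}, \ref{monotonyOverDual} and \ref{antitonicityUnderDual} together with the ELCC characterisation $(\calL,\order)\equiv(\calL^*,\order_*)$ from Section~\ref{relationship}, which identifies the two domains and hence the two versions of $\lf$ and $\tf$. The paper states this only as a one-line citation, so your write-up simply makes explicit the bookkeeping (in particular that the codomain is always $\calM$) that the paper leaves implicit.
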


%

\section{Related Work \& Conclusions}\label{relatedWork}

Linear extensions have usually be seen as the extension of a poset $(L,\leq)$ to a linear poset $(L,\leq')$, where the set $L$ is not changed~\cite{Brightwell:91,loof:2010,patil:2003,GeneratingLEFast}.  The main  drawback of this kind of linearisation is that all of them prevents to extend monotonic mappings. This feature can be derived from Proposition~\ref{teo1}, as the unique natural way to extend mappings to a linear extension is by homomorphic extension. Besides that, another drawback of linear extensions relates to the number of different possible linear extensions to look at. In \cite{Brightwell:91}, the authors resolved the problem of counting the number of linear extensions by showing that the problem is \emph{\#P-complete}, which may prevent it to be used in practice.


%
%

We instead have shown that it is possible to linearise posets, under the condition that they are finite, such that the monotonicity (resp. antitonicity) of arbitrary mappings is preserved.



%
%
%

\renewcommand{\baselinestretch}{1}
\setlength{\textwidth}{12.2cm}
\setlength{\textheight}{19.3cm}

\end{document}